\documentclass[11pt]{article}
\usepackage{amssymb,amsmath,amsthm}
\usepackage{xcolor}
\definecolor{darkred}{HTML}{800000}

\usepackage[margin=1.03in]{geometry}
\usepackage{titlesec}
\usepackage[pagebackref,bookmarks=true,pdftex]{hyperref}
\hypersetup{
  colorlinks=true,
  linkcolor=darkred,
  urlcolor=darkred,
  citecolor=darkred,
  linktocpage=true,
}
\usepackage[nameinlink,capitalize,noabbrev]{cleveref}
\crefname{equation}{equation}{equations}
\usepackage{microtype}
\makeatletter
\let\originaltagform@\tagform@
\renewcommand{\tagform@}[1]{\textcolor{darkred}{\originaltagform@{#1}}}
\makeatother

\newtheoremstyle{boldnames}
  {\topsep}{\topsep}{\itshape}{}{\bfseries}{.}{.5em}
  {\thmname{#1}\thmnumber{ #2}\thmnote{ \textbf{(#3)}}}
\theoremstyle{boldnames}
\newtheorem{theorem}{Theorem}
\newtheorem{maintheorem}[theorem]{Main Theorem}
\newtheorem{lemma}[theorem]{Lemma}
\newtheorem{corollary}[theorem]{Corollary}
\newtheorem{proposition}[theorem]{Proposition}
\crefname{maintheorem}{main theorem}{main theorems}
\Crefname{maintheorem}{Main Theorem}{Main Theorems}
\crefname{lemma}{Lemma}{Lemmas}
\crefname{proposition}{Proposition}{Propositions}
\Crefname{proposition}{Proposition}{Propositions}
\Crefname{appendix}{Appendix}{Appendices}

\renewcommand{\leq}{\leqslant}
\renewcommand{\geq}{\geqslant}

\newcommand{\F}{\mathbb F_2}
\newcommand{\wt}{\operatorname{wt}}
\newcommand{\swt}{\operatorname{swt}}
\newcommand{\dist}{\operatorname{dist}}
\newcommand{\supp}{\operatorname{supp}}
\newcommand{\ones}{\mathbf 1}
\newcommand{\defeq}{\mathrel{\overset{\mathrm{def}}{=}}}

\begin{document}

\title{Metric Self-Dual Completion and
Optimal Additive Hardness \\ for Quantum and Graph-State Distance}
\author{Rafail Ostrovsky\\ \ \\
UCLA}
\date{\today}

\setcounter{page}{0}
\maketitle
\thispagestyle{empty}

%%%%%%%%%%%%%%%%%%%%%%%%%%%%%%%%%%%%%%%%%%%%%%%%%%%%%%%%%%%%

\begin{abstract}
We prove that the quantum code distance is NP-hard to approximate
within an additive error of $c N$, for some constant $c >0$,
where $N$ is the number of qubits.
Our reductions are deterministic.
This improves the previous square-root additive gap to $\Omega(N)$ and resolves the explicitly stated linear-gap question of Kapshikar and Kundu.
Our result holds for CSS codes with identical
$X$- and $Z$-check spaces, and with a constant rate and constant relative
distance.
For every fixed $\lambda>1$, there is a constant $c>0$ such that hardness still holds even when every
nonidentity stabilizer has weight greater than $\lambda$ times the quantum
distance.
We also improve the hardness gap of graph state distance on $N$ vertices of Grigorescu, Jha, and Samperton from cube-root to $\Omega(N)$, resolving their explicitly stated open question. Both hardness results are asymptotically optimal since both distances are at most $N$. Our graph state distance hardness result holds for balanced bipartite
graphs with a binary adjacency matrix that is its own inverse (mod 2).

\smallskip
Our main technique  for both hardness bounds above is classical: we show how to convert any code $C$ of length $m$ into a self-dual code $A(C)$ of length $N=\Theta(m)$
while exactly doubling the original coset metric.
The conversion is deterministic and efficient.
We call it the {\em metric self-dual completion} of $C$. It comes with a linear embedding $\tau: \mathbb F_2^m \hookrightarrow \mathbb F_2^N$.
The embedding doubles all Hamming
distances between vectors in $\mathbb F_2^m$ and all pairwise distances between corresponding cosets.
The embedding also guarantees that all codewords of $A(C)$ of weight at most $2m$ are exactly  $\tau(C)$.
\end{abstract}

%%%%%%%%%%%%%%%%%%%%%%%%%%%%%%%%%%%%%%%%%%%%%%%%%%%%%%%%
\newpage
\pagenumbering{roman}
\tableofcontents
\newpage
\pagenumbering{arabic}
%%%%%%%%%%%%%%%%%%%%%%%%%%%%%%%%%%%%%%%%%%%%%%%%%%%%%%%%%%%%%

%%%%%%%%%%%%%%%%%%%%%%%%%%%%%%%%%%%%%%%%%
%%%%%%%%%%%%%%  SECTION 1 %%%%%%%%%%%%%%%%%%
%%%%%%%%%%%%%%%%%%%%%%%%%%%%%%%%%%%%%%%%%%%%%

\section{Introduction}
Recent work established  $N^{1/2}$ additive hardness gap for
quantum-code distance~\cite{KapshikarKundu23} and $N^{1/3}$ hardness gap for graph-state distance~\cite{GJS25}.
These works left open the tantalizing question whether either problem admits a linear additive
hardness gap.
We answer both questions in the affirmative.

Our main construction is classical. We transform any nonzero binary
code into a self-dual code with linear expansion in length while exactly doubling its minimum
distance. Our construction also preserves all distances
between corresponding source cosets. These additional metric guarantees hold for every
input nonzero code, independently of the hardness applications.
The main idea is to begin with a (separate) self-dual code whose nonzero words have large
weight and combine it in a novel way with the input code we are operating on. We retain the words orthogonal to an embedded copy of the input code,
then add that copy. We show that the dimensions removed and later added are equal, so the new
code is self-dual. Our weight separation guarantee ensures that every short word
comes from the input code. The resulting subspace and the completed code (from the two codes) give us the
quantum and graph-state additive hardness results, respectively. Since the output length
is linear,  both constructions preserve a linear hardness gap. 
All classical codes considered in this paper are binary linear codes. 

%%%%%%%%%%%%%%%%%%%%%%%%%%%%%%%%%%%%%%%%%%%%%%%%%%%%
\subsection{Metric Self-Dual Completion}

As usual,  $d(C)$ denotes the minimum distance of
$C$. Let $\dist(y,C)$ be the Hamming distance from $y$ to $C$.
Our main result (clauses~(\ref{main-clause-1})-(\ref{main-clause-4})) applies to every code $C\leq\mathbb F_2^m$.
For fixed $m$ and $H$, the same output length and embedding $\tau$
work for every $C$.
To simplify the notation, we refer to the self-dual completion of $C$
as $A$, instead of $A(C)$.
\begin{maintheorem}[Metric self-dual completion]
\label{thm-main}
$\forall$  $m\geq1$ and $H\geq2m$  $\exists$
$N=\Theta(H)$ and linear embedding $\tau:\F^m\hookrightarrow\F^N$, such that $\forall$ $C\leq\F^m$ $\exists$ self-dual code
$A=A^\perp\leq\F^N$ so that:
\begin{enumerate}
\renewcommand{\theenumi}{\roman{enumi}}
\renewcommand{\labelenumi}{(\theenumi)}
\item\label{main-clause-1}
$\forall x\in\F^m$, $2\cdot\wt(x)=\wt(\tau(x))$;
\item\label{main-clause-2}
$\forall x\in\F^m$,  $\dist(\tau(x),A)=2\cdot\dist(x,C)$;
\item\label{main-clause-3}
$\tau(C)=\{a\in A:\wt(a)\leq H\}$;
\item\label{main-clause-4}
A generator matrix for $A$ and  matrix for $\tau$ can be computed
in deterministic {\em poly}$(N)$ time.
\end{enumerate}
Furthermore, clause~(\ref{main-clause-2}) implies that the induced map
$y+C\mapsto\tau(y)+A$ is well-defined and injective.
Specifically, this correspondence doubles every pairwise coset distance.
Together with the self-dual code $A$, our construction produces a
self-orthogonal subspace $K\leq A$ where $\dim K=N/2-\dim C$ and
$d(K)>2H$. For $C\neq0$, $K$ satisfies:
$$
 \min_{v\in K^\perp\setminus K}\wt(v)=2d(C)
$$
\end{maintheorem}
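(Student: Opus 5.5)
The plan is to build everything from one auxiliary self-dual code $S$ of length $N$ with large minimum distance, together with the duplication embedding. Take $N=\Theta(H)$ even, $N\geq 2m$, and an explicit self-dual code $S=S^\perp\leq\F^N$ with $d(S)>2H+2m$. This is possible because $2H+2m\leq 3H$, so a constant relative distance suffices. Define $\tau(x)=(x_1,x_1,x_2,x_2,\dots,x_m,x_m,0,\dots,0)$. Clause~(\ref{main-clause-1}) is then immediate. Moreover $\tau(x)\cdot\tau(y)=2\,x\cdot y=0$, so $\tau(\F^m)$ is self-orthogonal. Neither $S$ nor $\tau$ depends on $C$.

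Given $C$, set $K=S\cap\tau(C)^\perp$ and $A=K+\tau(C)$. The first step is the dimension count. Nonzero elements of $\tau(C)$ have weight at most $2m\leq H<d(S)$, so $\tau(C)\cap S^\perp=\tau(C)\cap S=0$. Hence the pairing map $S\to\tau(C)^*$ is surjective, which gives $\dim K=N/2-\dim C$, and also $K\cap\tau(C)=0$. Next, $A$ is self-orthogonal: $K\perp K$ since $S$ is self-dual, $K\perp\tau(C)$ by definition, and $\tau(C)$ is self-orthogonal. Since $\dim A=N/2$, $A$ is self-dual. The bound $d(K)\geq d(S)>2H$ holds because $K\leq S$.

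The metric clauses all come from one triangle-inequality estimate. Any element of $\tau(\F^m)+A$ has the form $\tau(z)+k$ with $k\in K$. If $k\neq0$, then
$$\wt(\tau(z)+k)\geq d(S)-2m>2H.$$
For clause~(\ref{main-clause-3}): an element $a=k+\tau(c)\in A$ with $k\neq0$ has weight $>2H\geq H$, while $\wt(\tau(c))\leq 2m\leq H$. So the words of weight at most $H$ are exactly $\tau(C)$.

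For clause~(\ref{main-clause-2}): $\dist(\tau(x),A)=\min_{c,k}\wt(\tau(x+c)+k)$. The terms with $k=0$ give exactly $2\dist(x,C)\leq 2m$. The terms with $k\neq0$ exceed $2H\geq 2m$. Injectivity and doubling of coset distances follow by applying this to $\tau(x)-\tau(y)=\tau(x-y)$.

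For the last claim, $S+\tau(C)\leq K^\perp$ and both have dimension $N/2+\dim C$, so $K^\perp=S+\tau(C)$. Take $v=s+\tau(c)\in K^\perp\setminus K$ and split into cases:
\begin{itemize}
\item If $s=0$, then $c\neq0$ and $\wt(v)=2\wt(c)\geq 2d(C)$, with equality attained.
\item If $c=0$, then $v\in S\setminus K$ has weight $>2H$.
\item Otherwise $\wt(v)\geq d(S)-2m>2H\geq 2m\geq 2d(C)$.
\end{itemize}
This gives $\min_{v\in K^\perp\setminus K}\wt(v)=2d(C)$ when $C\neq0$.

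Clause~(\ref{main-clause-4}) reduces to computing $S$, after which $K$ and $A$ come from linear algebra. I expect the main obstacle to be producing $S$ deterministically in $\mathrm{poly}(N)$ time. We need a self-dual binary code of prescribed length within a constant factor of $H$, with relative distance bounded below by a constant. I would first try concatenation: a self-dual (Euclidean or Hermitian) Reed--Solomon or generalized RS code over $\mathbb F_{2^t}$, combined with a fixed inner self-dual map, using a self-dual basis of $\mathbb F_{2^t}$ over $\F$ so that duality is preserved. This keeps length control within constant factors. If that fails, I would fall back on any explicit asymptotically good self-dual family, choosing the smallest admissible length $\geq$ a fixed constant times $H$. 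Padding is not an option, since direct sums with short self-dual codes would destroy the distance.
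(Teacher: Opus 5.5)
Everything downstream of the host is correct and is essentially the paper's own argument (\cref{self-dual-exchange} and \cref{coset-guarantees}). Your interleaved $\tau$, the stronger requirement $d(S)>2H+2m$, and the pairing-surjectivity dimension count are cosmetic variants of the paper's $\tau(x)=(x,x,0^{N-2m})$, $d(M)>2H$, and $K^\perp=M\oplus\tau(C)$.

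The genuine gap is the one you flag yourself. You never produce a deterministic poly$(N)$-time self-dual $S\leq\F^N$ with $d(S)=\Omega(N)$ at every scale $N=\Theta(H)$. This single ingredient carries both clause~(\ref{main-clause-4}) and the bound $N=\Theta(H)$.

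Your concrete plan does not deliver it. Self-duality pins the binary rate at exactly $1/2$. So once the outer (generalized) RS code over $\mathbb F_{2^t}$ is self-dual, the inner map must have rate $1$, which means it can only be the self-dual-basis expansion. The distance you can then certify is about $n/2$ on binary length $tn$. Since the outer length satisfies $n\leq 2^t+1$, the certified relative distance is $O(1/\log N)$. That forces $N=\Omega(H\log H)$, which would weaken the hardness gaps in the applications to $\Omega(N/\log N)$. An inner code with real redundancy drops the rate below $1/2$, so the concatenation is at best self-orthogonal, and you give no way to add the missing dimensions without destroying the distance. The fallback to ``any explicit asymptotically good self-dual family'' is not a proof until you name a specific family with constant relative distance and bounded ratios between consecutive lengths.

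The paper supplies this ingredient as \cref{explicit-self-dual-code}, proved in \cref{hosts} by elementary linear algebra on top of one black box. It proceeds in three steps.
\begin{enumerate}
\item Start from the Shpilka--Guruswami dual-good codes $D_j$. Both $D_j$ and $D_j^\perp$ have relative distance at least $1/30$, and $n_{j+1}=8n_j$ gives the needed length granularity.
\item Pass to $E=\{(ax+by,\,x+y):x\in D_j,\ y\in D_j^\perp\}$ with coordinatewise products, where $a\in D_j$, $b\in D_j^\perp$, and $a+b=\ones$. This puts $\ones$ in $E$, makes $E^\perp$ the block swap of $E$, and keeps $\min\{d(E),d(E^\perp)\}\geq\min\{d(D_j),d(D_j^\perp)\}$.
\item Apply the prefix-sum map $\Phi(c,w)_i=(c_i+s_{i-1}(w),\,c_i+s_i(w))$ on $E\times E^\perp$. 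Its image is self-dual of length $4n_j$ with distance at least $\min\{d(E^\perp),2d(E)\}$.
\end{enumerate}
This self-dualizing step, applied to a code whose dual is also good, is the idea your proposal is missing.
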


By setting $H=2m$, we get a self-dual code $A$ of linear length (in the length of the original codeword) and, by clause~(\ref{main-clause-4}), do it in deterministic poly-time.
All codewords of $A$ of weight at most $2m$ are
exactly the embedded codewords of $\tau(C)$ by clause~(\ref{main-clause-3}).
That means that $d(A)=2d(C)$ whenever $C\neq0$, by clauses~(\ref{main-clause-1}) and~(\ref{main-clause-3}).
Thus, we obtain self-duality (with only a constant-factor blow-up) in the code length,
 while preserving the original coset geometry.
Our main (classical) construction and its proof appear in \cref{completion}.

%%%%%%%%%%%%%%%%%%%%%%%%%%%%%%%%%%%%%%%%%%%%%%%%%%%%%%%%%%

\paragraph{Two outputs of our main construction.}

Jumping ahead, $A$ will be used for graph-state distance hardness, and $K$ will be used as
both CSS check spaces. As we explain in
\cref{self-dual-exchange}, our main construction gives both hardness
results. We remark that the full source-coset metric isomorphism is a stronger guarantee than
these applications require. Generalizations of clauses~(\ref{main-clause-1})-(\ref{main-clause-4}) of our main theorem appear in the appendix.

%%%%%%%%%%%%%%%%%%%%%%%%%%%%%%%%%%%%%%%%%%%%%%%%%%%%%

\subsection{Optimal additive hardness for quantum and graph-state distance}
Recall that a distance problem has a \emph{linear additive gap} if, for some
constant $c>0$, it is NP-hard to distinguish the YES case
$f\leq T$ from the NO case $f>T+cN$. Such a gap implies NP-hardness of approximation within additive
error $c_0N$ for some constant $c_0>0$.
Here $f$ is the distance, $N$ is the number of qubits or vertices,
and the integer threshold $T$ is part of the input.  Inputs come from a ``promise problem'' and are guaranteed to satisfy one of the two conditions.
We use many-one deterministic Karp reductions to prove NP-hardness.

\paragraph{Distance of a quantum code.}
Quantum distance is the smallest number of qubits on which an
undetectable error can change the encoded information.
We prove hardness for Calderbank-Shor-Steane (CSS) codes with
identical $X$- and $Z$-check spaces.  For such a code
$Q=CSS(B,B)$, write $N$ for the number of qubits,
$k$ for the number of encoded (logical) qubits, and $d(Q)$ for the logical
distance of the quantum code.  We write $d(B)$ for the minimum
Hamming weight of a nonzero vector in $B$; equivalently, this is the minimum
weight of a nonzero stabilizer label.  See \Cref{quantum-preliminaries} for a purely
linear-algebra formulation of these notions.

%%%%%%%%%%%%%%%%%%%%%%%%%%%%%%%%%%%%%%%%%%%%%%%%%%%%
%%%%%%%%%%%%%%%%%%%%%%%%%%%%%%%%%%%%%%%%%%%%%%%%%%%%

For a fixed constant $c>0$, the additive-gap quantum-distance promise
problem takes as input an $N$-qubit code
$Q=\operatorname{CSS}(B,B)$ and an integer threshold $T$, with the promise that either
 $d(Q)\leq T$ or
 $d(Q)>T+cN$. The challenge is to determine which of these two conditions holds.

%%%%%%%%%%%%%%%%%%%%%%%
\begin{theorem}[Optimal additive hardness for quantum distance]
\label{thm-intro-quantum-hardness}
For every fixed $\lambda>1$, there are positive constants $c,r,\delta$
with the following property.  The promise problem above is NP-hard even when
$Q$ has at least $rN$ logical qubits, has distance at least $\delta N$,
and every nontrivial stabilizer has weight greater than $\lambda \cdot d(Q)$.
\end{theorem}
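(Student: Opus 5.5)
The plan is to reduce from a classical problem with a linear additive gap: the minimum distance of a binary linear code, or a coset/nearest-codeword variant. We start from a known NP-hardness result (deterministic, e.g.\ via the Cheng--Wan / Micciancio line of work, or deterministic constant-factor hardness of minimum distance composed with tensoring or concatenation as needed). It gives codes $C\leq\F^m$ of constant rate with the promise that either $d(C)\leq t$ or $d(C)>\gamma t$ for a constant $\gamma>1$, where $t=\Theta(m)$. Since $t$ is linear in $m$, a multiplicative gap at linear distance is already a linear additive gap. Rate can be boosted, if needed, by taking direct sums with a fixed good code of large distance, which preserves $d(C)$ in both cases.

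Given such a $C$, we apply \Cref{thm-main} with $H=\mu m$, where $\mu\geq2$ is a constant chosen depending on $\lambda$. This produces $N=\Theta(H)=\Theta(m)$ and a self-orthogonal $K\leq A=A^\perp$ with $\dim K=N/2-\dim C$ and $d(K)>2H$. We output $Q=\operatorname{CSS}(K,K)$ with threshold $T=2t$.

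The logical operators of $\operatorname{CSS}(K,K)$ are exactly the vectors of $K^\perp\setminus K$ (for both $X$ and $Z$). By the last display of \Cref{thm-main}, $d(Q)=\min_{v\in K^\perp\setminus K}\wt(v)=2d(C)$ whenever $C\neq0$. Hence the YES case gives $d(Q)\leq 2t$, and the NO case gives $d(Q)>2\gamma t=T+2(\gamma-1)t$. Since $t=\Theta(m)=\Theta(N)$, this is a gap of $cN$. The logical-qubit count is $k=N-2\dim K=2\dim C\geq rN$, because $C$ has constant rate and $N=\Theta(m)$. The distance in the NO case is at least $2\gamma t\geq\delta N$. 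For the YES case we need $d(C)\geq\Omega(m)$ as well; we therefore use a source problem promising $\alpha m\leq d(C)$ in both cases (obtainable by the direct-sum padding above, or as a byproduct of the standard reductions).

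The stabilizer condition requires $d(K)>\lambda d(Q)$. We have $d(K)>2H=2\mu m$ and $d(Q)=2d(C)\leq 2m$, so it suffices to take $\mu\geq\lambda$. Then $c=2(\gamma-1)t/N$ depends on $\lambda$ through $N=\Theta(\mu m)$, and remains a positive constant for each fixed $\lambda$. The reduction is deterministic and polynomial-time by clause~(\ref{main-clause-4}), and $K$ is computed together with $A$.

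The main obstacle is the classical source. We need deterministic NP-hardness of minimum distance with a multiplicative gap at linear absolute distance $t=\Theta(m)$, constant rate, and a linear lower bound on distance in both cases. If the available deterministic result only gives distance $t=o(m)$, the planned fix is to tensor or concatenate, or to switch to a nearest-codeword/coset source. In the coset case we would use clause~(\ref{main-clause-2}) and encode the target $y$ by adjoining it to $C$, so that $C'=C+\langle y\rangle$ and new minimum-weight words correspond to $\dist(y,C)$. Keeping such an adjoining step deterministic while preserving linear gap and constant rate is the delicate point.
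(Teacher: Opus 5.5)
Your reduction step is the same as the paper's. You run the completion with $H$ an integer multiple of $m$ that is at least $\max\{2,\lambda\}m$ (the paper takes $H=2\lceil\lambda\rceil m$), output $\operatorname{CSS}(K,K)$ with $T=2t$, and read off $k=2\dim C$ and $d(Q)=2d(C)$. You also get $d(K)>2H\geq2\lambda m\geq\lambda d(Q)$ from $d(C)\leq m$. All of that is correct.

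The gap is the classical source, which you leave open, and the repairs you sketch would not supply it. The statement requires $d(Q)\geq\delta N$ in the YES case too, so you need $d(C)\geq\alpha m$ on YES instances. Each of your proposed fixes fails to give this:
\begin{itemize}
\item Direct-sum padding cannot create a distance lower bound, since $d(C\oplus G)=\min\{d(C),d(G)\}\leq d(C)$.
\item Tensoring raises the relative distance to a power, so it cannot turn $t=o(m)$ into linear distance.
\item Concatenation only gives $d\geq d_{\mathrm{out}}d_{\mathrm{in}}$, not an exact correspondence, so the multiplicative gap is not preserved.
\item Adjoining a target gives $d(C+\langle y\rangle)=\min\{d(C),\dist(y,C)\}$. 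This reproduces $\dist(y,C)$ only if $d(C)$ already exceeds the NO threshold, which is the standard obstacle in reducing nearest codeword to minimum distance.
\end{itemize}

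The missing ingredient is a citation. Austrin and Khot~\cite[Theorem~1.1]{AustrinKhot14} give, deterministically and for a fixed $\gamma>1$, NP-hardness of $d(C)\leq t$ versus $d(C)>\gamma t$ on codes with $\dim C\geq\varepsilon m$ and $d(C)\geq\varepsilon m$. That supplies determinism, constant rate, and linear distance in both cases at once, and the paper uses exactly this source.

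With it, bound the gap as $d(Q)-T>(1-1/\gamma)d(Q)\geq2(1-1/\gamma)\varepsilon m$, as the paper does. This uses the distance lower bound rather than a lower bound on $t$, which the NO promise does not provide, so $c$ comes out as a uniform constant without assuming $t=\Theta(m)$.
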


Our theorem shows that this problem remains hard for codes of constant rate and
constant relative distance, even when every nontrivial stabilizer has weight greater than $\lambda$ times the logical distance.
We stress that the construction does not guarantee sparse check matrices.  Thus, the above
theorem does not establish hardness for quantum LDPC codes.

Optimal additive hardness for quantum distance has received some attention in the recent literature. Specifically,
Kapshikar and Kundu~\cite[Corollary~2 and Section~V]{KapshikarKundu23}
established a randomized reduction square-root additive hardness gap and asked whether a linear gap could be obtained.
Grigorescu, Jha, and Samperton~\cite[Theorem~1 and Sections~1.1-1.2]{GJS25}
provided an alternative proof of the square-root hardness under deterministic Karp reductions and posed the linear-gap question as a main motivation.

Grigorescu, Jha, and Samperton
discuss a limitation of their hypergraph-product approach under
deterministic many-one reductions~\cite[Proposition~22]{GJS25}.
Our construction uses a self-dual host and a subspace exchange to obtain
linear output length.
Our \Cref{thm-intro-quantum-hardness} resolves the linear-gap
question posed in both papers above, going all the way  from $N^{1/2}$ to
$\Omega(N)$ with a deterministic Karp reduction.

%%%%%%%%%%%%%%%%%%%%%%%%%%%%%%%%%%%%%%
%%%%%%%%%%%%%%%%%%%%%%%%%%%%%%%%%%%%%%
%%%%%%%%%%%%%%%%%%%%%%%%%%%%%%%%%%%%%%
\paragraph{Graph-state Distance.}

We use the graph-state distance definition of Kovalev, Dumer, and Pryadko~\cite{KovalevDumerPryadko11}.
For a simple graph $G$, with adjacency matrix $\Gamma$, its graph-state distance $d(\Gamma)$ is the minimum weight of a
nontrivial element of the stabilizer of the graph state associated with $G$.
We provide a linear-algebraic definition of this graph-state distance in \cref{graph-preliminaries}.
Grigorescu, Jha, and Samperton~\cite[Theorem~2 and Section~5.3]{GJS25}
established an $N^{1/3}$ additive gap for this problem and asked whether it could be made linear.
Our \cref{graph-gap} resolves their question. Furthermore, it answers their question  even when $G$ is
bipartite and balanced,  and when  its adjacency matrix is an involution (over $\F$):

\begin{theorem}[Optimal additive hardness for graph-state distance]
\label{graph-gap}
$\exists$ $c,\delta>0$ s.t. Graph-state minimum distance is NP-hard to approximate within an additive error of $c\cdot N$ even when $G$ is bipartite and balanced and its adjacency matrix $\Gamma$
 satisfies $\Gamma^2=I$ and  $d(\Gamma) \geq\delta N$.
\end{theorem}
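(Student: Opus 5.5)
Our plan is a reduction from the classical minimum-distance problem, with all the work done by the metric self-dual completion of \cref{thm-main}. Start from a known deterministic NP-hardness result for approximating the minimum distance of a binary linear code within a constant multiplicative factor $\gamma>1$ (for example, Cheng--Wan / Micciancio). Such results come with instances $(C,t)$, $C\leq\F^m$, in which either $d(C)\leq t$ or $d(C)>\gamma t$, and in which $t\geq\eta m$ for a constant $\eta>0$. If the source hardness does not directly give linear $t$, we first pad or concatenate so that $t=\Theta(m)$. This padding is the step where care is needed.

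Apply \cref{thm-main} with $H=2m$ to get a self-dual $A\leq\F^N$ with $N=\Theta(m)$. By clauses~(\ref{main-clause-1}) and~(\ref{main-clause-3}), $d(A)=2d(C)$ for $C\neq0$. Next we turn $A$ into a graph. Since $A$ is self-dual of dimension $N/2$, we would compute a generator matrix in systematic form $[I\mid P]$ after a coordinate permutation; a permutation does not change weights. Self-duality gives $PP^T=I$ over $\F$, so $P$ is an invertible $N/2\times N/2$ matrix.

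Define the bipartite balanced graph $G$ with adjacency matrix
$$\Gamma=\begin{pmatrix}0&P\\ P^T&0\end{pmatrix}.$$
Then $\Gamma^2=\operatorname{diag}(PP^T,P^TP)=I$, using $P^TP=I$ because $P$ is invertible with inverse $P^T$.

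The key claim is that the graph-state distance $d(\Gamma)$ equals $d(A)$, or at least is tied to it by a fixed constant factor. We would argue this using the linear-algebra definition of \cref{graph-preliminaries}. A stabilizer element of the graph state is indexed by $u\in\F^N\setminus\{0\}$ and acts as $X^{u}Z^{\Gamma u}$, so its weight is $|\supp(u)\cup\supp(\Gamma u)|$. Write $u=(u_1,u_2)$, so that $\Gamma u=(Pu_2,P^Tu_1)$. The support on the first half is $\supp(u_1)\cup\supp(Pu_2)$, and similarly on the second half.

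Restricting to $u$ with $u_2=0$ gives weight $\wt(u_1)+\wt(P^Tu_1)=\wt(u_1P,u_1)$, which is the weight of a codeword of $A$ up to coordinate order. Hence $d(\Gamma)\leq d(A)$. In the other direction, the weight of every $u$ is at least $\max$ of the weights coming from the two halves; this will be the main obstacle. Rather than proving equality, we would lower-bound the weight by $\tfrac12\bigl(\wt(u_1,P^Tu_1)+\wt(Pu_2,u_2)\bigr)$, and both terms are weights of codewords of $A$ (using $A=A^\perp$ and $PP^T=I$ for the second). This gives $d(A)/2\leq d(\Gamma)\leq d(A)$. That two-sided bound loses a factor of $2$, so we choose the source gap $\gamma>2$; it can be amplified by tensoring while keeping length polynomial, but then $N$ is no longer linear in $m$. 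If the factor of $2$ turns out to be unavoidable, we would instead use the exact structure from clause~(\ref{main-clause-3}) to sharpen the bound. Any $u$ of low weight forces both $(u_1,P^Tu_1)$ and $(Pu_2,u_2)$ into $A$ with weight at most $2m$, hence into $\tau(C)$. The overlap structure of $\tau$, which doubles weights, should then give $d(\Gamma)=\min\{d(A),\ldots\}\geq 2d(C)$ exactly.

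With $d(\Gamma)=2d(C)$, or a bound within the constant gap, we set $T=2t$ and $c=2(\gamma-1)\eta\cdot m/N=\Omega(1)$. YES instances then satisfy $d(\Gamma)\leq T$, and NO instances satisfy $d(\Gamma)>T+cN$. Moreover $d(\Gamma)\geq 2t\geq\delta N$ for a suitable $\delta>0$. The whole construction is deterministic and runs in polynomial time by clause~(\ref{main-clause-4}).
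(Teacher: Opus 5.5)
You follow the paper's pipeline: the completion with $H=2m$, then the bipartite graph from a systematic matrix $[I\mid P]$ of the self-dual $A$. However, the step that carries the whole gap, $d(\Gamma)\geq d(A)$, is never proved. You only get $d(\Gamma)\geq d(A)/2$, and the clause-(iii) ``sharpening'' stops exactly where an argument is needed.

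The missing observation is short and needs nothing about $\tau$. On each half, $\lvert\supp(a)\cup\supp(b)\rvert\geq\wt(a+b)$. So the stabilizer indexed by $u=(u_1,u_2)\neq0$ has weight at least $\wt(u_1+Pu_2)+\wt(P^\top u_1+u_2)$. Because $P^\top P=I$, this is the weight of the codeword $(u_1+Pu_2,\,P^\top(u_1+Pu_2))\in A$.
\begin{itemize}
\item If that codeword is nonzero, the weight is at least $d(A)$.
\item If it is zero, then $u_1=Pu_2$ and $u_2=P^\top u_1$, so $u_2\neq0$. The stabilizer weight is then exactly $\wt(Pu_2)+\wt(u_2)=\wt(Pu_2,u_2)\geq d(A)$.
\end{itemize}
Hence $d(\Gamma)=d(A)=2d(C)$ exactly. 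This is the Grigorescu--Jha--Samperton \cref{self-dual-graph-distance}, which the paper simply invokes. Your reason for rejecting tensoring is also mistaken. $C^{\otimes k}$ has length $m^k$ and distance $d(C)^k$, so constant rate and relative distance survive, and $N$ stays linear in the length of the code actually fed to the completion. That alone would have rescued the factor-$2$ bound.

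The second gap is the source instance. A gap of $cN$ and the promise $d(\Gamma)\geq\delta N$ require $d(C)\geq\varepsilon m$ in \emph{every} instance. A lower bound on the threshold $t$ is not enough: your claim $d(\Gamma)\geq 2t$ fails in the YES case, where $d(\Gamma)\leq 2t$. Padding cannot supply this, since zero-padding only lowers relative distance and repetition leaves it unchanged. You also give no concatenation scheme that raises relative distance while preserving the $t$ versus $\gamma t$ gap. Constant relative distance is precisely what distinguishes the Austrin--Khot deterministic reduction from earlier ones such as Cheng--Wan.

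The paper uses Austrin--Khot instances, which have $\dim C\geq\varepsilon m$ and $d(C)\geq\varepsilon m$. \Cref{self-dual-distance-hardness} then gives a self-dual $A$ with $d(A)=2d(C)\geq\delta N$, and in the NO case $d(A)-T>(1-1/\gamma)d(A)\geq cN$. The length- and distance-preserving map of \cref{self-dual-graph-distance} carries this unchanged to $\Gamma$. Balance and $\Gamma^2=I$ follow from $PP^\top=P^\top P=I$, as you correctly derived.
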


We start with Grigorescu et al.'s code-to-graph mapping~\cite[proof of Theorem~2]{GJS25}. We then use our
 metric self-dual completion as the missing piece: our construction doubles the source distance and produces a self-dual input.  Self-duality forces the graph to be balanced and gives $\Gamma^2=I$ as needed.

%%%%%%%%%%%%%%%%%%%%%%%%%%%%%%%%%%%%%%

\begin{samepage}
\begin{center}
\begin{tabular}{c|c|c}
 problem & previous additive gap & this paper\\ \hline
 CSS quantum distance & $N^{1/2}$ & $\Omega(N)$\\
 graph-state distance & $N^{1/3}$ & $\Omega(N)$
\end{tabular}
\end{center}
Since distances are at most $N$, our additive gaps are asymptotically optimal for both problems.
\end{samepage}

%%%%%%%%%%%%%%%%%%%%%%%%%%%%%%%%%%%%%%%
%%%%%%%%%%%%%%%%%%%%%%%%%%%%%%%%%%%%%%%
%%%%%%%%%%%%%%%%%%%%%%%%%%%%%%%%%%%%%%%%
\subsection{Classical consequences and related work}
\label{classical-consequences}

We first explore the implications of clauses~(\ref{main-clause-1}), (\ref{main-clause-3}), and~(\ref{main-clause-4}) of our main theorem (of metric self-dual completion).  We start with the result of
Dumer, Micciancio, and Sudan~\cite{DMS03} who showed
a linear additive (hardness)  gap for the minimum-distance code, using a randomized reduction. For multiplicative approximation, Cheng and Wan~\cite{CW12}
proved hardness within every fixed constant factor under deterministic reductions. The same result  was later shown using a PCP-free proof by
Bhattiprolu, Guruswami, Lee, and Ren~\cite{BGLR25}.
For codes of constant rate and relative distance, Austrin and
Khot~\cite{AustrinKhot14} obtained a deterministic constant-factor
hardness.

\medskip

We stress that none of these reductions guarantee self-duality of the output codes,
as explicitly discussed in Kapshikar and Kundu (in connection to quantum
applications, see~\cite[Section~I]{KapshikarKundu23}).
Our main result, specifically clauses~(\ref{main-clause-1}), (\ref{main-clause-3}), and~(\ref{main-clause-4}), allows us to adopt these classical hardness results to self-dual codes.

%%%%%%%%%%%%%%%%%%%%%%%%%%%%%%%%%%%%%%%%%%%%%%%%%%%%%%%%%%%%

Austrin and Khot~\cite[Theorem~1.1]{AustrinKhot14} showed that,
for some constants $\varepsilon>0$ and $\gamma>1$, it is
NP-hard under a deterministic reduction to distinguish
$d(C)\leq t$ from $d(C)>\gamma t$, where $t>0$ is an integer
and $C\leq\F^m$ satisfies $\dim C\geq\varepsilon m$ and
$d(C)\geq\varepsilon m$. Combining this with clauses~(\ref{main-clause-1}), (\ref{main-clause-3}), and~(\ref{main-clause-4}) of our main theorem gives:

\begin{corollary}[Optimal additive hardness for self-dual distance]
\label[corollary]{self-dual-distance-hardness}
$\exists$ $c,\delta>0$ such that, given a self-dual
$A\leq\F^N$ and threshold $T$, it is NP-hard (under
Karp reductions) to distinguish $d(A)\leq T$
vs. $d(A)>T+cN$, even when $d(A)\geq\delta N$.
\end{corollary}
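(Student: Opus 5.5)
We reduce from the Austrin--Khot gap problem via \cref{thm-main} with $H=2m$. Given an instance $(C,t)$ with $C\leq\F^m$, $\dim C\geq\varepsilon m$, $d(C)\geq\varepsilon m$, and the promise $d(C)\leq t$ or $d(C)>\gamma t$, we compute in deterministic polynomial time, by clause~(\ref{main-clause-4}), the self-dual code $A\leq\F^N$ with $N=\Theta(m)$. We output $(A,T)$ with $T=2t$. The reduction is a deterministic Karp reduction because the Austrin--Khot reduction is deterministic and so is our completion.

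Next we check that $d(A)=2d(C)$. Since $\dim C\geq\varepsilon m>0$, the code $C$ is nonzero. Take a nonzero $a\in A$ of minimum weight.
\begin{itemize}
\item If $\wt(a)\leq H=2m$, then clause~(\ref{main-clause-3}) gives $a=\tau(x)$ for some nonzero $x\in C$. Clause~(\ref{main-clause-1}) then gives $\wt(a)=2\wt(x)\geq 2d(C)$.
\item If $\wt(a)>2m$, then $\wt(a)>2m\geq 2d(C)$.
\item Conversely, if $x\in C$ has minimum weight, then $\tau(x)\in A$ has weight $2d(C)$ by clauses~(\ref{main-clause-1}) and~(\ref{main-clause-3}).
\end{itemize}
Hence $d(A)=2d(C)$. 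This also gives the lower bound $d(A)=2d(C)\geq 2\varepsilon m$. Since $N\leq c_1 m$ for a constant $c_1$ coming from $N=\Theta(H)=\Theta(m)$, we get $d(A)\geq\delta N$ with $\delta=2\varepsilon/c_1$.

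Now the gap. In the YES case $d(A)=2d(C)\leq 2t=T$. In the NO case $d(A)=2d(C)>2\gamma t$, so $d(A)>T+2(\gamma-1)t$. To make this additive gap linear in $N$, we need $t=\Omega(m)$. This follows from the promise: in the YES case $t\geq d(C)\geq\varepsilon m$. In the NO case we may assume $t\geq\varepsilon m/\gamma$; otherwise the instance trivially fails to be YES and we can map it to a fixed NO instance, or equivalently note that the hard instances can be taken with $t\geq\varepsilon m/\gamma$. So in both cases $t\geq\varepsilon m/\gamma$, and the gap satisfies
\[
2(\gamma-1)t\geq 2(\gamma-1)\varepsilon m/\gamma\geq cN
\]
with $c=2(\gamma-1)\varepsilon/(\gamma c_1)$.

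Integrality is not an issue, because $T$ is an integer and the strict inequality carries through. The only delicate point is this bookkeeping that $t$ is linear in $m$ and $N$ is linear in $m$. Both follow directly from the Austrin--Khot parameters and from $N=\Theta(H)$ with $H=2m$.
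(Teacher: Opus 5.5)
Your proposal is correct and follows the paper's reduction: an Austrin--Khot instance, the completion with $H=2m$, $T=2t$, and $\delta=2\varepsilon/\alpha$, where your $c_1$ plays the role of the paper's $\alpha$. The only difference is in the NO-case gap: the paper writes $d(A)-T>(1-1/\gamma)\,d(A)\geq(1-1/\gamma)\,\delta N$ using the bound $d(A)\geq\delta N$ you already proved, so your claim that one needs $t=\Omega(m)$ is not true, and the step mapping small-$t$ instances to a fixed NO instance, while valid, is unnecessary.
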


\begin{proof}
Use Austrin-Khot hard instance $(C,t)$ as input to
\cref{thm-main}, using clauses~(\ref{main-clause-1}), (\ref{main-clause-3}), and~(\ref{main-clause-4}), with $H=2m$. For some fixed  constant $\alpha > 0$ observe that
$d(A)=2d(C)$ and $N\leq\alpha m$.
Assign
$T=2t$,
$\delta=2\varepsilon/\alpha$, and
$c=\delta(1-1/\gamma)$. We therefore have $d(A)\geq\delta N$.
The YES case gives $d(A)\leq T$. The NO case gives $d(A)>\gamma T$, thus:
$$
 d(A)-T>\left(1-\frac1\gamma\right)d(A)\geq cN
$$
\end{proof}

The hard instances of
\Cref{self-dual-distance-hardness} have a rate of $1/2$
and a relative distance of at least $\delta$.
%
%%%%%%%%%%%%%%%%%%%%%%%%%%%%%%%%%%%%%%%%%%%%%%%%%%%%%%%%%
%
Next, observe that clauses~(\ref{main-clause-1}) and~(\ref{main-clause-3}) of \cref{thm-main} give $d(A)=2d(C)$, which preserves every multiplicative
approximation gap when the threshold is also doubled.
Hence, when we apply the PCP-free hardness
result~\cite[Theorem~1.3]{BGLR25}, this gives the following:
\begin{corollary}[Hardness within every constant factor for self-dual distance]
For every fixed constant $\gamma>1$, given a self-dual $A\leq\F^N$ and a
threshold $T$, it is hard (under deterministic Karp reductions) to distinguish
$d(A)\leq T$ from $d(A)>\gamma T$.
This hardness proof is PCP-free.
\end{corollary}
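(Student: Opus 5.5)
The corollary follows by composing the deterministic reduction of \cite[Theorem~1.3]{BGLR25} with the completion of \cref{thm-main}. That result gives, for every fixed $\gamma>1$, a PCP-free deterministic Karp reduction producing instances $(C,t)$ with $C\leq\F^m$. In the YES case $d(C)\leq t$, and in the NO case $d(C)>\gamma t$. The input may also be a code with $d(C)\le t$ trivially realized; we assume, as is standard, that $C\neq0$ in both cases, or else handle $C=0$ separately, as discussed below. On input $(C,t)$ the reduction runs the construction of \cref{thm-main} with $H=2m$. By clause~(\ref{main-clause-4}) this yields, in deterministic $\mathrm{poly}(N)=\mathrm{poly}(m)$ time, a generator matrix for a self-dual $A\leq\F^N$ with $N=\Theta(m)$. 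It then outputs $(A,T)$ with $T=2t$.

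The key step is the identity $d(A)=2d(C)$ for $C\neq0$, which the paper already notes follows from clauses~(\ref{main-clause-1}) and~(\ref{main-clause-3}). Spelled out: any nonzero $a\in A$ with $\wt(a)\leq H=2m$ lies in $\tau(C)$ by~(\ref{main-clause-3}), so $a=\tau(x)$ with $x\in C$. Since $\tau$ is injective, $x\neq0$, and by~(\ref{main-clause-1}) $\wt(a)=2\wt(x)\geq 2d(C)$. Conversely, a minimum-weight $x\in C$ gives $\tau(x)\in A$ of weight $2d(C)\leq 2m=H$. Every other nonzero codeword of $A$ has weight $>H\geq 2d(C)$. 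Hence $d(A)=2d(C)$.

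The gap transfers directly. In the YES case $d(A)=2d(C)\leq 2t=T$. In the NO case $d(A)=2d(C)>2\gamma t=\gamma T$. The reduction is deterministic and introduces no PCP machinery, so PCP-freeness is inherited from \cite{BGLR25}.

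The only potential obstacle is bookkeeping about degenerate instances. If the source reduction could output $C=0$, then $d(C)$ would have to be interpreted, say as $\infty$, which belongs to the NO side. In that case $A$ has no codewords of weight $\leq 2m$ other than $0$, so $d(A)>2m\geq\gamma T$ as long as $\gamma t<m$. If $\gamma t\geq m$, the instance can be replaced by a fixed NO instance. Either way the correspondence of YES and NO instances is preserved.
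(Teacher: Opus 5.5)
Your proof is correct and follows the paper's argument: compose the Bhattiprolu--Guruswami--Lee--Ren reduction (on nonzero $C$) with the completion at $H=2m$, use $d(A)=2d(C)$ from clauses (i) and (iii), and set $T=2t$. The only difference is that the paper starts from a source factor $\gamma'>\gamma$, so that even a non-strict NO condition $d(C)\geq\gamma' t$ still gives the strict $d(A)>\gamma T$, whereas you assume the source theorem is already stated with a strict inequality.
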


\begin{proof}
Choose $\gamma'>\gamma$. We reduce a nonzero
instance $(C,t)$ of Bhattiprolu, Guruswami, Lee, and Ren,
with $C\leq\F^m$, $t>0$, and hardness factor $\gamma'$, to an instance
$(A,T)$ of the problem in the corollary.
We apply our completion~\cref{thm-main}, using clauses~(\ref{main-clause-1}), (\ref{main-clause-3}), and~(\ref{main-clause-4}),
with $H=2m$ to efficiently construct
 a self-dual $A$ where
$d(A)=2d(C)$ and $T=2t$.
In the YES case, $d(A)\leq T$.
In the NO case, $d(A)\geq\gamma'T>\gamma T$.
Finally, our reduction, together with Bhattiprolu et al.'s
proof, remains efficient and PCP-free.
\end{proof}

%%%%%%%%%%%%%%%%%%%%%%%%%%%%%

\paragraph{Remark:} The above corollaries rely on preserving distance, whereas a different
direction minimizes the number of coordinates added to achieve
self-orthogonality.
An, Kaplan, Kim, Luo, and
\linebreak Wang~\cite{AnEtAl25} study minimizing the number
of columns that must be added to a generator matrix to make its code
self-orthogonal while retaining its input dimension. Their
embeddings (of Hamming codes of length $2^r-1$ with $r\geq3$) are self-dual.
However, their result does not guarantee self-duality for arbitrary input
codes and does not preserve the source distance gap.
In contrast, our metric completion allows the dimension to grow and guarantees self-duality.

\paragraph{Type I and Type II generalizations.}

\Cref{typed-realizations} extends clauses~(\ref{main-clause-1})-(\ref{main-clause-4}) of our main theorem to Type I and Type II completions
with distance scaling factors of two and four, respectively. These extensions also provide linear recovery maps. We stress that these generalizations are not used in the main body of the paper.

%%%%%%%%%%%%%%%%%%%%%%%%%%%%%%%%%%%%
\paragraph{Organization.}
\Cref{preliminaries} sets up the notation and presents self-contained linear-algebra
descriptions of quantum codes and graph distance.
We also present a code-to-graph reduction that we use in the graph-state application in a later section.
\Cref{completion} proves clauses~(\ref{main-clause-1})-(\ref{main-clause-4}) of our main theorem: it explains the construction and proof of the metric
self-dual completion theorem and the companion guarantee for $K$ subspace, which will be helpful in applications.
\Cref{applications} gives the hardness results for both quantum distance and graph-distance applications of clauses~(\ref{main-clause-1}), (\ref{main-clause-3}), and~(\ref{main-clause-4}) of our main theorem.
\Cref{hosts} builds the host construction, which is one of the input ingredients needed for clauses~(\ref{main-clause-2})-(\ref{main-clause-4}) of our main theorem.
\Cref{typed-realizations} extends clauses~(\ref{main-clause-1})-(\ref{main-clause-4}) of our main theorem to
Type I and Type II codes with a linear recovery map.

%%%%%%%%%%%%%%%%%%%%%%%%%%%%%%%%%%%%%%%%%
%%%%%%%%%%%%%%  SECTION 2 %%%%%%%%%%%%%%%%%%
%%%%%%%%%%%%%%%%%%%%%%%%%%%%%%%%%%%%%%%%%%%%%

\section{Preliminaries}
\label{preliminaries}

\subsection{Notation and conventions}
\label{ss:notation}

All lengths, dimensions, indices, repetition parameters, weight thresholds,
and Hamming-ball radii are integers.
Vector spaces are over $\F$. $\wt(x)$ is the Hamming weight, and $\supp(x)$ is the support of $x$. $U\leq V$ means that $U$ is a linear subspace of $V$.
For $U,W\leq V$ when $U\cap W=\{0\}$, define {\em direct sum} as
$U\oplus W\defeq\{u+w:u\in U,\ w\in W\}$. $C^\perp$ is the orthogonal complement of $C$.
A code $C$ is \emph{self-orthogonal} if $C\leq C^\perp$ and
\emph{self-dual} if $C=C^\perp$. Recall that the distance of a code is
$d(C)=\min_{0\neq c\in C}\wt(c)$, where $d(0)=+\infty$.
As usual,
$\dist(y,C)=\min_{c\in C}\wt(y+c)$. The quotient metric is
$\dist(x+C,y+C)\defeq\dist(x+y,C)$.
Self-dual codes are \emph{Type II} if all weights are
divisible by four; they are \emph{Type I} otherwise, and, of course, of even length. We will use
\begin{equation}
 \wt(x+y)=\wt(x)+\wt(y)-2|\supp(x)\cap\supp(y)|
 \label{equation-weight-overlap}
\end{equation}
Codes are specified by bases. Generator matrices have
basis rows. Vectors are columns. Matrices also represent linear maps.
All our algorithms are deterministic.

%%%%%%%%%%%%%%%%%%%%%%%%%%%%%%%%%%%%%%%%%%%%%%%%%%%%
\subsection{A linear algebra view of quantum codes}
\label{quantum-preliminaries}

For  $V=\F^N\times\F^N$,  \emph{symplectic weight} is defined as
$\swt(x,z)\defeq|\supp(x)\cup\supp(z)|$ where $x,z\in\F^N.$ The {\em symplectic pairing} (also called {\em symplectic form}) is defined as
$\omega((x,z),(x',z'))\defeq x^\top z'+z^\top x'$.
Observe that it is nondegenerate and bilinear. For $S \leq V$ {\em symplectic orthogonal complement} is defined as $S^{\perp_{\mathrm s}}\defeq
\{v\in V :  \forall s\in S, \; \omega(v,s)=0\}$. $S$ is {\em isotropic}
if $S\leq S^{\perp_{\mathrm s}}$.
Nondegeneracy gives $\dim S^{\perp_{\mathrm s}}=2N-\dim S$. An isotropic subspace $S$ is called a {\em stabilizer space}
\cite[Sections~I and II-B1]{KapshikarKundu23} and jointly with $N$ describes, for distance calculations, a {\em stabilizer code} $Q$ with parameter $k=N-\dim S$ (where $k$ is the number of qubits that $Q$ encodes). When $k > 0$ the {\em logical distance} of $Q$ is defined as:
\begin{equation}
 d(Q)\defeq\min_{v\in S^{\perp_{\mathrm s}}\setminus S}\swt(v)
 \label{equation-logical-distance}
\end{equation}
When $K\leq \F^N$ is
 self-orthogonal, observe that $S = K \times K$ is isotropic.
The quantum code is denoted as $Q=\operatorname{CSS}(K,K)$, named after
Calderbank and Shor~\cite{CalderbankShor96} and Steane~\cite{Steane96}.
Observe that $S^{\perp_{\mathrm s}}=K^\perp\times K^\perp$,
$k=N-2\dim K$, and, for $k>0$,
\begin{equation}
 d(Q)=\min_{u\in K^\perp\setminus K}\wt(u)
 \label{equation-CSS-distance}
\end{equation}
Since $S^{\perp_{\mathrm s}}=K^\perp\times K^\perp$ and $S=K\times K$,
every $(x,z)\in S^{\perp_{\mathrm s}}\setminus S$ has at least one
component in $K^\perp\setminus K$.
Thus, for $(x,z)\in S^{\perp_{\mathrm s}}\setminus S$,
$\swt(x,z)\geq\max\{\wt(x),\wt(z)\}\geq
\min_{u\in K^\perp\setminus K}\wt(u)$,
with equality at $(u,0)$ for a minimum-weight
$u\in K^\perp\setminus K$.
The minimum symplectic weight of a nonzero vector in $S=K\times K$,
called the \emph{minimum nonzero stabilizer weight}, is defined as:
\begin{equation}
 d(K)=\min_{0\neq s\in S}\swt(s)
 \label{equation-min-stabilizer}
\end{equation}
with the minimum equal to $+\infty$ when $K=0$.

%%%%%%%%%%%%%%%%%%%%%%%%%%%%%%%%%%%%%%%%%%%%%%%%
\subsection{A linear algebra view of graph distance}
\label{graph-preliminaries}
Let $G$ be a simple unweighted graph on $N$ vertices. We denote by $\Gamma$ its adjacency matrix. The notion of {\em graph-state stabilizer space} is defined as
$S\defeq\{(x,\Gamma x):x\in\F^N\}$. Since $G$ is a simple, $\Gamma=\Gamma^\top$, which makes $S$
isotropic, with $\dim S=N$. Thus, $S=S^{\perp_{\mathrm s}}$. That means that
 $k=0$ (hence, no qubits are encoded). The distance of $G$ is defined as the smallest
symplectic weight of a nonzero stabilizer label
\cite[Section~1.2]{GJS25}:

\begin{equation}
 d(\Gamma)\defeq\min_{x\in\F^N\setminus\{0\}}|\supp(x)\cup\supp(\Gamma x)|
 \label{equation-graph-distance}
\end{equation}
If $x$ is a selector function for a vertex set $X$, then $\supp(\Gamma x)$ consists
of vertices with an odd number of neighbors in $X$, giving an equivalent
combinatorial definition.
Graph distance minimizes over $S\setminus\{0\}$, whereas logical
distance in \cref{equation-logical-distance} minimizes over
$S^{\perp_{\mathrm s}}\setminus S$ for a code $Q$ with $k>0$, i.e., encoding at least one qubit.
\begin{lemma}[Grigorescu, Jha, and Samperton,
{\cite[proof of Theorem~2]{GJS25}}]
\label[lemma]{self-dual-graph-distance}
Let $A\leq\F^N$ with dimension $0<r<N$.  After a coordinate permutation, take the parity-check matrix of $A$ and
put it into systematic form $[I_{N-r}\mid R]$.
A simple bipartite graph with adjacency matrix
$$
 \Gamma=\begin{pmatrix}0&R\\R^\top&0\end{pmatrix}
$$
has $N$ vertices,  and
$d(\Gamma)=\min\{d(A),d(A^\perp)\}$.
\end{lemma}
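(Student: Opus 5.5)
We want $d(\Gamma)=\min\{d(A),d(A^\perp)\}$ for $\Gamma=\begin{pmatrix}0&R\\R^\top&0\end{pmatrix}$. The plan is to compute $|\supp(x)\cup\supp(\Gamma x)|$ directly in block coordinates and reduce it to weights of vectors in $A$ and $A^\perp$.

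\textbf{Setup.} Write $n_1=N-r$ and $n_2=r$. After the coordinate permutation (which changes neither $d(A)$, $d(A^\perp)$, nor graph distance up to relabeling vertices), the parity-check matrix is $P=[I_{n_1}\mid R]$. So $A=\ker P=\{(Rv,v):v\in\F^{n_2}\}$. The rows of $P$ span $A^\perp$, so $A^\perp=\{(u,R^\top u):u\in\F^{n_1}\}$. The graph is simple and bipartite because the diagonal blocks are zero and $\Gamma$ is symmetric. It has $n_1+n_2=N$ vertices.

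\textbf{Weight formula.} For $x=(u,v)$ with $u\in\F^{n_1}$ and $v\in\F^{n_2}$, we have $\Gamma x=(Rv,R^\top u)$. The two blocks index disjoint vertex sets, so
$$|\supp(x)\cup\supp(\Gamma x)|=|\supp(u)\cup\supp(Rv)|+|\supp(v)\cup\supp(R^\top u)|.$$

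\textbf{Upper bound.} Take $x=(0,v)$ with $v\neq0$. The quantity becomes $\wt(Rv)+\wt(v)=\wt((Rv,v))$, which ranges over all nonzero codewords of $A$. Note that $(Rv,v)\ne0$ exactly when $v\ne0$. Symmetrically, $x=(u,0)$ gives exactly the nonzero weights of $A^\perp$. Hence $d(\Gamma)\le\min\{d(A),d(A^\perp)\}$. Both codes are nonzero since $0<r<N$.

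\textbf{Lower bound.} Take any nonzero $x=(u,v)$.
\begin{itemize}
\item If $v\neq0$: the first block satisfies $|\supp(u)\cup\supp(Rv)|\ge\wt(Rv)$, and the second satisfies $|\supp(v)\cup\supp(R^\top u)|\ge\wt(v)$. The total is therefore at least $\wt((Rv,v))\ge d(A)$.
\item If $v=0$: then $u\neq0$, and the same reasoning gives at least $\wt(u)+\wt(R^\top u)\ge d(A^\perp)$.
\end{itemize}
Combining the two directions gives the equality.

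The only point needing care is the description of $A$ and $A^\perp$ from the systematic form, in particular that $A^\perp$ is the row space $\{(u,R^\top u)\}$. Each vector $u$ gives one row combination, and $\dim A^\perp=N-r=n_1$ matches. Beyond that, the argument is the monotonicity $|S\cup T|\ge|T|$ applied blockwise, so I expect no real obstacle.
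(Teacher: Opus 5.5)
Your proof is correct and complete. The descriptions $A=\{(Rv,v)\}$ and $A^\perp=\{(u,R^\top u)\}$ obtained from the systematic parity-check matrix are right, and $\Gamma(u,v)=(Rv,R^\top u)$. The blockwise bound $|S\cup T|\geq|T|$ then gives both inequalities.

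The paper does not reprove this lemma; it imports it from \cite[proof of Theorem~2]{GJS25}. So there is no in-paper argument to compare against, and your computation supplies a self-contained one.

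Your two blocks combine into a single identity,
\[
 |\supp(x)\cup\supp(\Gamma x)| = |\supp(a)\cup\supp(b)|, \qquad a=(u,R^\top u)\in A^\perp,\quad b=(Rv,v)\in A .
\]
Exchanging the $X$ and $Z$ labels on the second vertex class (a local Hadamard) therefore turns the graph-state stabilizer $\{(x,\Gamma x)\}$ into the CSS-state stabilizer $A^\perp\times A$, without changing symplectic weight. Your case split on $v$ is then exactly the bound $\swt(a,b)\geq\max\{\wt(a),\wt(b)\}$ that the paper uses in \cref{quantum-preliminaries} for CSS distance. That viewpoint explains why the answer is $\min\{d(A),d(A^\perp)\}$. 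Your version has the advantage of needing nothing beyond the definition of $d(\Gamma)$.
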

If  $A$ is self-dual, this graph is balanced and has $d(\Gamma)=d(A)$.
Moreover, $R$ is square, and self-duality gives
$RR^\top=R^\top R=I$, so $\Gamma^2=I$.

%%%%%%%%%%%%%%%%%%%%%%%%%%%%%%%%%%%%%%%%%
%%%%%%%%%%%%%%  SECTION 3 %%%%%%%%%%%%%%%%%%
%%%%%%%%%%%%%%%%%%%%%%%%%%%%%%%%%%%%%%%%%%%%%

\section{Metric self-dual completion}
\label{completion}

In this section, we prove clauses~(\ref{main-clause-1})-(\ref{main-clause-4}) of our main result and its companion guarantees. The only ingredient whose proof we delay to \cref{hosts} is the
following  proposition:

\begin{proposition}[Linear-size self-dual code]
\label[proposition]{explicit-self-dual-code}
For every $\Delta\geq1$, there is a deterministically and efficiently constructible self-dual
$P\leq\F^N$ such that $4\mid N$, $d(P)\geq\Delta$, and
$N=\Theta(\Delta)$.
\end{proposition}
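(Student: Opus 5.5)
We construct $P$ as a concatenated code: a self-dual outer code over an extension field, concatenated with a binary self-dual inner code. To make the concatenation self-dual, we will use a self-dual basis of the extension field, or equivalently trace-orthogonality.

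\textbf{Outer code.} Fix $q=2^s$ for a constant $s$. We first need an explicit family of $\mathbb F_q$-linear codes $D\leq\mathbb F_q^n$ that are Euclidean self-dual and have relative distance bounded below by a constant $\rho>0$, for every sufficiently large even $n$ (or at least along a sequence of $n$ with bounded ratios). One option is algebraic-geometry codes. Another is Reed--Solomon codes, which are self-dual only when $n\le q$, so they do not by themselves give linear length. I would therefore use self-dual AG codes from the Garcia--Stichtenoth tower, which are known to be constructible in deterministic polynomial time and to meet the Tsfasman--Vl\u{a}du\c{t}--Zink bound. Self-duality comes from choosing the divisors $G$ and $D$ with $2G-D$ canonical. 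Over characteristic $2$ one can take a square root of the differential, which is always possible in characteristic two.

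\textbf{Concatenation.} Choose a self-dual basis $\{\beta_1,\dots,\beta_s\}$ of $\mathbb F_q$ over $\F$, meaning $\operatorname{Tr}(\beta_i\beta_j)=\delta_{ij}$; such a basis exists when $s$ is odd. Expanding each coordinate in this basis gives a map $\phi:\mathbb F_q\to\F^s$ with $\phi(a)\cdot\phi(b)=\operatorname{Tr}(ab)$. Then $\phi(D)$ has binary dimension $sn/2$, and it is self-orthogonal because $\sum_i\operatorname{Tr}(a_ib_i)=\operatorname{Tr}(\sum_i a_ib_i)=0$. Hence $\phi(D)$ is self-dual. Its distance is only at least $d(D)$, which is still linear in the length $sn$.

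To obtain $4\mid N$, take $N'=sn$ and pass to $P=\phi(D)\otimes$ a small fixed self-dual code. Alternatively, take the direct sum of two or four copies, or the Kronecker product with $\{00,11\}$, which doubles the length and preserves self-duality and the distance bound. For a given $\Delta$, pick $n=\Theta(\Delta/\rho)$ and pad as needed by direct-summing with copies of $\{00,11\}$. Direct sums of self-dual codes are self-dual, although padding with $\{00,11\}$ reduces the minimum distance to $2$.

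\textbf{Main obstacle.} The padding issue above is the weakness, so instead of padding I would choose $n$ from the available tower lengths: consecutive lengths in the tower differ by constant factors, so $N=\Theta(\Delta)$ remains true. I would then achieve $4\mid N$ by tensoring with the repetition code $\{00,11\}$, i.e.\ duplicating each coordinate. This gives $(x,x)$ with $(x,x)\cdot(y,y)=0$ always, so the resulting code is self-orthogonal but not self-dual. The fix is to use the direct sum $P\oplus P$, which is self-dual with length $2N'$ and distance $\ge d(D)$, and, if necessary, $P^{\oplus4}$.

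The genuine difficulty is a deterministic, explicit construction of asymptotically good self-dual AG codes. If citing this is unacceptable, a fallback would be an explicit iterated construction such as the $(u\mid u+v)$ method, or a derandomized search over the MacWilliams-type mass formula using the method of conditional expectations on self-dual codes of constant-size blocks combined with an explicit good outer code. Either way, this explicitness step is where I expect the real work.
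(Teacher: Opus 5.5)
Your binary step is sound. Expanding a Euclidean self-dual $D\leq\mathbb F_{2^s}^n$ in a trace-self-dual basis gives a binary self-dual code of length $sn$ and distance at least $d(D)$, and $P\oplus P$ then gives $4\mid N$. (Trace-self-dual bases of $\mathbb F_{2^s}$ over $\F$ exist for every $s$, not only odd $s$. You need this, since the Garcia--Stichtenoth tower lives over $\mathbb F_{\ell^2}$.)

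The genuine gap is the outer ingredient on which everything rests: a deterministic polynomial-time family of self-dual codes over $\mathbb F_{2^s}$ with constant relative distance. You neither prove it nor pin it to a precise result. Via your own expansion it already implies the proposition, so the proposal only moves the difficulty.

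Your sketch of this ingredient is also incomplete. Self-duality of $C_L(D,G)$ (after rescaling coordinates by square roots of residues) needs an exact identity $2G=D+(\eta)$ with $G$ disjoint from $D$, so that $\eta$ has simple poles on $\supp D$. Characteristic $2$ does make the canonical class divisible by $2$, since exact differentials have even divisors. But you additionally need three things:
\begin{enumerate}
\item $D$ itself must be linearly equivalent to twice a divisor;
\item $g\leq(1/2-\rho)n$, so that the designed distance $n-\deg G=n/2-g+1$ is linear;
\item a polynomial-time basis of this particular $L(G)$.
\end{enumerate}
An existence theorem for self-dual codes on the Tsfasman--Vladut--Zink bound is not by itself such a construction.

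Your fallbacks do not close the gap. Iterated $(u\mid u+v)$ gives Reed--Muller-type codes whose rate or relative distance tends to $0$. The conditional-expectation idea does not explain how a non-self-dual outer code combined with small self-dual blocks would produce a self-dual code.

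The idea you are missing is that no self-duality over a large field is needed. The paper starts from an explicit binary \emph{dual-good} code: Shpilka--Guruswami give $D\leq\F^{n}$ with $\dim D=n/2$ and $\min\{d(D),d(D^\perp)\}\geq n/30$, with available lengths growing by a factor $8$. It then symmetrizes this code using elementary linear algebra, in two steps.

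First, since $\ones_n^\top h=h^\top h=0$ for $h\in D\cap D^\perp$, you can write $\ones_n=a+b$ with $a\in D$ and $b\in D^\perp$. Then $E=\{(ax+by,\,x+y):x\in D,\ y\in D^\perp\}\leq\F^{2n}$ contains $\ones_{2n}$. Its dual is $E$ with the two blocks swapped, and both have distance at least $\min\{d(D),d(D^\perp)\}$.

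Second, take prefix sums $s_i(w)=w_1+\cdots+w_i$, which vanish at the end because $w\perp\ones$. The map $\Phi(c,w)=\bigl((c_i+s_{i-1}(w),\,c_i+s_i(w))\bigr)_i$ on $E\times E^\perp$ is injective and satisfies $\Phi(c,w)^\top\Phi(c',w')=c^\top w'+w^\top c'=0$. So its image is self-dual of length $4n$ with distance at least $\min\{d(E^\perp),2d(E)\}$, and $4\mid N$ comes for free.

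This needs only one cited explicit binary code plus Gaussian elimination. Your route would still have to supply the function-field machinery.
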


\noindent
\Cref{hosts} proves this proposition starting from
the dual-good construction in Shpilka's paper, whose Appendix~A is by Guruswami~\cite{Shpilka09}. In that section, we show how to go from a dual-good code to a self-dual code without losing constant relative rate and constant relative distance.  We now proceed with our main construction.

\subsection{Main construction: building metric self-dual}
\label{structural-guarantees}

We start with an arbitrary input length $m$ and a threshold $H\in\mathbb{Z}$ with $H\geq2m$.
We use \cref{explicit-self-dual-code} with $\Delta=2H+1$ to
choose a self-dual \emph{host} $M\leq\F^N$ with $d(M)>2H$ and
$N=\Theta(H)$. The host and its length depend only on $H$.
$N\geq d(M)>2H\geq2m$.
We define $\tau:\F^m\hookrightarrow\F^N$ as  follows:
$$\tau(x)\defeq(x,x,0^{N-2m})$$

It is immediate that the image $W=\tau(\F^m)$ is self-orthogonal
(since $\tau(x)^\top\tau(y)=x^\top y+x^\top y=0$ for all $x,y\in\F^m$).
Moreover, for every $x\in\F^m$,
\begin{equation}
 \wt(\tau(x))=2\wt(x)
 \label{equation-main-weight}
\end{equation}
Therefore, every word of $W$ has weight at most
 $2m\leq H$.
Since $d(M)>2H$ and $\wt(\tau(x))\leq H$, the triangle inequality gives,  $\forall$ $u\in M\setminus\{0\}$, and $\forall$ $x\in\F^m$,
\begin{equation}
\label{equation-main-separation}
\wt(u+\tau(x))>H
\end{equation}

%%%%%%%%%%%%%%%%%%%%%%%%%%%%%%%%%%%%%%%%%%%%%%%%

We call the host $M$ \emph{protected through $H$} when
\cref{equation-main-separation} holds.  Furthermore, observe that $M\cap W=0$.

As a second step, starting with any  $C\leq\F^m$, we want to insert $\tau(C)$
into host $M$ while preserving self-duality.  Since we need to keep the subspace self-orthogonal, we will keep only host words that are orthogonal to $\tau(C)$. So define
$$K\defeq M\cap\tau(C)^\perp$$

As a third step, we add $\dim(\tau(C))$ dimensions to $K$ to obtain:
$$A\defeq K+\tau(C)$$

$K$ is self-orthogonal since $M$ is, and $K\leq M$.  Furthermore, $\tau(C)$ is self-orthogonal by our construction.
$K$ and $\tau(C)$ are
 orthogonal to each other since $K \leq \tau(C)^\perp$. Since
 $K\cap\tau(C)\subseteq M\cap W=\{0\}$, we have $A=K\oplus\tau(C)$. Therefore,  $A$ is self-orthogonal.

To prove self-duality, the only missing step is to show that when we go from
$M$ to $K$, the operation removes from $M$ exactly $\dim C$ dimensions to get $K$. When we build $A$ we add $\tau(C)$ dimensions back to $K$ and therefore $A$ has
dimension $N/2$. The same calculation also identifies the short
vectors in $K^\perp\setminus K$, which will later determine the quantum distance.

%%%%%%%%%%%%%%%%%%%%%%%%%%%%%%%%%%

\paragraph{Unique decomposition.}

Using the fact that $K\leq M$, $M\cap W=0$, and $\tau(C)\leq W$, we obtain
the following equations:
\begin{equation}
 A=K\oplus\tau(C)
 \label{equation-main-code-sum}
\end{equation}
and
\begin{equation}
 A+W=K\oplus W
 \label{equation-main-direct-sums}
\end{equation}
Therefore, every word of $A+W$ can be written as
$u+\tau(x)$, with  $u\in K$ and $x\in\F^m$.
Observe that $M$, length $N=\Theta(H)$, and embedding
$\tau$  only depends on $m,H$, so the same choices work for any code of length $m$.
Furthermore, the host $M$ is computable in deterministic poly-time in $N$ as shown in
\cref{explicit-self-dual-code}. The rest is just Gaussian elimination and direct-sum decomposition. Therefore, we can compute in poly($N$) bases for
$W,K,A,A+W$.

%%%%%%%%%%%%%%%%%%%%%%%%%%%%%%%%%%%%%%%%%%%%%%

\begin{lemma}[Self-dual completion]
\label[lemma]{self-dual-exchange}
For $K=M\cap\tau(C)^\perp$ and $A=K+\tau(C)$ constructed above,
$\dim K=N/2-\dim C$, and $A$ is self-dual.
Moreover, $K$ is self-orthogonal, $d(K)>2H$, and
\begin{equation}
 \{v\in K^\perp\setminus K:\wt(v)\leq H\}
 =\tau(C\setminus\{0\})
 \label{equation-completion-dual-short-words}
\end{equation}
Consequently, if $C\neq0$, then
\begin{equation}
 \min_{v\in K^\perp\setminus K}\wt(v)=2d(C)
 \label{equation-completion-dual-distance}
\end{equation}
\end{lemma}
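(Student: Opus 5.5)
The plan is to first pin down $\dim K$ by a rank computation, then get self-duality for free, and finally identify the short words of $K^\perp\setminus K$ using the separation property \cref{equation-main-separation}.

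\textbf{Dimension and self-duality.} The subspace $K=M\cap\tau(C)^\perp$ is the kernel of the linear map $\phi:M\to\F^{\dim C}$ given by $u\mapsto(u^\top\tau(c_i))_i$, where $c_1,\dots,c_{\dim C}$ is a basis of $C$. I would show that $\phi$ is surjective. If it were not, some nonzero $\tau(c)$ with $c\in C\setminus\{0\}$ would be orthogonal to all of $M$. Since $M=M^\perp$, this would force $\tau(c)\in M\cap W=\{0\}$, contradicting injectivity of $\tau$. Hence
\[
\dim K=\dim M-\dim C=N/2-\dim C.
\]
The paper has already shown that $A=K\oplus\tau(C)$ is self-orthogonal. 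Its dimension is $N/2$, so $A$ is self-dual. Self-orthogonality of $K$ is inherited from $M$. The bound $d(K)>2H$ follows from $K\leq M$ and $d(M)>2H$.

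\textbf{Structure of $K^\perp$.} I would show that $K^\perp=M^\perp+\tau(C)=M\oplus\tau(C)$. First, $K=M\cap\tau(C)^\perp=M^\perp\cap\tau(C)^\perp=(M+\tau(C))^\perp$. Taking orthogonal complements gives $K^\perp=M+\tau(C)$, and the sum is direct because $M\cap W=0$. So every $v\in K^\perp$ decomposes uniquely as $v=u+\tau(c)$ with $u\in M$ and $c\in C$.

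\textbf{Short words.} Take $v=u+\tau(c)\in K^\perp$ with $\wt(v)\leq H$.
\begin{itemize}
\item If $u\neq0$, then \cref{equation-main-separation} gives $\wt(u+\tau(c))>H$, a contradiction. Hence $u=0$ and $v=\tau(c)$.
\item If $v\notin K$, then $c\neq0$. Otherwise $v=0\in K$.
\end{itemize}
Conversely, let $c\in C\setminus\{0\}$. Then $\tau(c)\in K^\perp$, and by \cref{equation-main-weight} its weight is $2\wt(c)\leq 2m\leq H$. Also $\tau(c)\notin K$, since $\tau(c)\in W\setminus\{0\}$ and $K\cap W\subseteq M\cap W=0$. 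This proves \cref{equation-completion-dual-short-words}.

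\textbf{Minimum weight.} For $C\neq0$, the short words of $K^\perp\setminus K$ are nonempty and form exactly $\tau(C\setminus\{0\})$. Their minimum weight is $2d(C)\leq H$ by \cref{equation-main-weight}. Every other word of $K^\perp\setminus K$ has weight $>H$. This gives \cref{equation-completion-dual-distance}.

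The only step that needs care is the surjectivity in the dimension count, which hinges on $M=M^\perp$ and $M\cap W=0$. The rest is routine.
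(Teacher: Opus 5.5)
Your proposal is correct and is essentially the paper's proof. The paper also uses $K=(M+\tau(C))^\perp$, hence $K^\perp=M\oplus\tau(C)$, together with the separation bound \cref{equation-main-separation} and $K\cap W\subseteq M\cap W=0$, to identify the short words of $K^\perp\setminus K$ and obtain $2d(C)$. The only deviation is your dimension count via surjectivity of $u\mapsto(u^\top\tau(c_i))_i$ on $M$: it is valid but redundant, since $\dim K=N-\dim(M\oplus\tau(C))=N/2-\dim C$ follows at once from the description of $K^\perp$ that you prove anyway, and this is how the paper obtains it.
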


\begin{proof}
We have already established in \cref{structural-guarantees} that $A$ is self-orthogonal and
$A=K\oplus\tau(C)$. Since $\tau$ is injective, we have $\dim C= \dim \tau(C)$ and therefore we only need to show that
$\dim K=N/2-\dim C$.
Self-duality of $M$ gives us
$$
K=M^\perp\cap\tau(C)^\perp=(M+\tau(C))^\perp
$$
But we also know that $M\cap W=0$ and $\tau(C)\leq W$. Therefore
$M+\tau(C)$ is direct.   That is, $K=(M\oplus \tau(C))^\perp$. Therefore
\begin{equation}
 K^\perp=M\oplus\tau(C)
 \label{equation-main-orthogonal-complement}
\end{equation}

%%%%%%%%%%%%%%%%%%%%%%%%%%%%%%%%%%%%%%%%%%%%%%%%%%
\noindent
Since $\dim K = N- \dim K^\perp$, it follows that
$\dim K=N-\dim(M+\tau(C))=N/2-\dim C$. By
\cref{equation-main-code-sum},
$\dim A=\dim K+\dim C=N/2$. Finally, since $A$ is self-orthogonal, $A=A^\perp$.
Since $K\leq M$, is self-orthogonal and $d(K)\geq d(M)>2H$, by \cref{equation-main-orthogonal-complement}, every word of $K^\perp$
is uniquely defined as $u+\tau(c)$ with $u\in M$ and $c\in C$.
If $u\neq0$, its weight exceeds $H$ by
\cref{equation-main-separation}. If $u=0$, its weight is
$2\wt(c)\leq H$, and it lies outside $K$ exactly when $c\neq0$,
because $K\cap\tau(C)=0$.
This proves \cref{equation-completion-dual-short-words}.
For $C\neq0$, these short words have minimum weight $2d(C)$,
which gives \cref{equation-completion-dual-distance}.
\end{proof}

%%%%%%%%%%%%%%%%%%%%%%%%%%%%%%%%%%%%%%%%%%%%%%%%%%%%
\subsection{Preservation of source-coset distances}
\label{coset-guarantees}

For any  $y\in\F^m$,  let's consider the coset $\tau(y)+A$,  and its relation to the original coset $y+C$.
Since we have already shown that $A=K\oplus\tau(C)$, every word of $\tau(y)+A$ has a unique
representation that can be written as $u+\tau(x)$, with $u\in K$ and $x\in y+C$.
When $u=0$, the word is $\tau(x)\in\tau(y+C)$, with weight
$2\wt(x)\leq H$. On the other hand, if $u\neq0$, its weight exceeds
$H$ by \cref{equation-main-separation}. Therefore, a minimum weight
representative of coset  $\tau(y)+A$ must have $u=0$.
In other words, every minimum-weight representative of $\tau(y)+A$
has zero host component $u=0$ and weight exactly $2\dist(y,C)$.
This proves
\begin{equation}
 \dist(\tau(y),A)=2\dist(y,C)
 \label{equation-coset-metric}
\end{equation}
The same argument gives
\begin{equation}
 \{p\in\tau(y)+A:\wt(p)\leq H\}=\tau(y+C)
 \label{equation-coset-cutoff}
\end{equation}
The direct sum equations in
\cref{equation-main-code-sum,equation-main-direct-sums} show that $\tau$ induce an isomorphism
$\F^m/C\to(A+W)/A$.
Applying this to $y+z$ shows that this isomorphism
exactly doubles the distance between any two source cosets
$y+C$ and $z+C$.

\begin{proof}[Proof of \Cref{thm-main}.]
The construction in \cref{structural-guarantees} picks
$M,N,\tau$ independently of $C$, with $d(M)>2H$ and $N=\Theta(H)$.
By \cref{self-dual-exchange}, we know that $A=A^\perp$.
Now, \cref{equation-main-weight,equation-coset-metric,equation-coset-cutoff}
give clauses~(\ref{main-clause-1})-(\ref{main-clause-3}) of the main theorem, where we set $y=0$ in
\cref{equation-coset-cutoff}.
The coset correspondence and our pairwise distance guarantees
are already proved in \cref{coset-guarantees}.
For $C\neq\{0\}$, \cref{equation-main-weight,equation-coset-cutoff} (with $y$ set to $0$) give
\begin{equation}
 d(A)=2d(C)\leq H
 \label{equation-completion-distance}
\end{equation}
For $C=\{0\}$, our construction already sets $A=M$, so $d(A)>2H$.
The running time in clause~(\ref{main-clause-4}) is analyzed in  \cref{structural-guarantees}.
\end{proof}

%%%%%%%%%%%%%%%%%%%%%%%%%%%%%%%%%%%%%%%%%%%%%%%%%%%%%%%%%%%%%%
%%%%%%%%%%%%%%%%%%%%%%%%% SECTION 4 %%%%%%%%%%%%%%%%%%%%%%
%%%%%%%%%%%%%%%%%%%%%%%%%%%%%%%%%%%%%%%%%%%%%%%%%%%%%%%%%

\section{Hardness applications}
\label{applications}

Our two additive hardness gaps use different outputs of our main construction.
For quantum minimum distance, we use $K$ for CSS check spaces.
We use a self-dual code $A$ as the input to the code-to-graph reduction
of Grigorescu, Jha, and Samperton~\cite[proof of Theorem~2]{GJS25}.
Both hardness constructions exactly double the classical source distance and retain the
linear output length.
The Austrin-Khot hardness theorem supplies us with the hardness gap to get a
linear additive gap for both settings.

%%%%%%%%%%%%%%%%%%%%%%%%%%%%%%%
\subsection{Application to Quantum code distance}
\label{quantum}

\begin{lemma}[Identical-check Minimum Distance]
\label[lemma]{mindist-CSS}
For all nonzero  $C\leq\F^m$ and $H\geq2m$,
in poly$(m+H)$ time we can construct
$K\leq\F^N$ such that
$K$ is self-orthogonal and $Q=\operatorname{CSS}(K,K)$  where $N=\Theta(H)$,
$k=2\dim C$, $d(Q)=2d(C)$, and $d(K)>2H$.
\end{lemma}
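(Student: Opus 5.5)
The plan is to read this lemma off the construction of \cref{structural-guarantees} and \cref{self-dual-exchange}, using the CSS formulas of \cref{quantum-preliminaries}. Given nonzero $C\leq\F^m$ and $H\geq2m$, we run the construction exactly as before. First, invoke \cref{explicit-self-dual-code} with $\Delta=2H+1$ to obtain the self-dual host $M\leq\F^N$ with $d(M)>2H$ and $N=\Theta(H)$. Then set $\tau(x)=(x,x,0^{N-2m})$ and output $K=M\cap\tau(C)^\perp$. By \cref{self-dual-exchange}, $K$ is self-orthogonal and $d(K)>2H$. Hence $S=K\times K$ is isotropic, and $Q=\operatorname{CSS}(K,K)$ is a legitimate stabilizer code on $N$ qubits.

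Next comes the parameter count. Since \cref{self-dual-exchange} gives $\dim K=N/2-\dim C$, we get $k=N-2\dim K=2\dim C$. Because $C\neq0$, this gives $k>0$, so the logical distance is defined. We can then apply \cref{equation-CSS-distance}, which says $d(Q)=\min_{u\in K^\perp\setminus K}\wt(u)$. By \cref{equation-completion-dual-distance} this minimum equals $2d(C)$. The stabilizer bound $d(K)>2H$ is the same as the bound on the minimum nonzero stabilizer weight in \cref{equation-min-stabilizer}, because every nonzero $(x,z)\in K\times K$ has $\swt(x,z)\geq\max\{\wt(x),\wt(z)\}>2H$.

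For efficiency, the host is built in deterministic poly$(N)$ time by \cref{explicit-self-dual-code}. Computing $K$ only requires Gaussian elimination: intersect $M$ with the orthogonal complement of a basis of $\tau(C)$. Since $N=\Theta(H)$ and $H\geq 2m$, the total time is poly$(m+H)$.

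There is no real obstacle here. The only point needing care is justifying the CSS distance formula \cref{equation-CSS-distance}, which the preliminaries already did: every element of $S^{\perp_{\mathrm s}}\setminus S$ has a component in $K^\perp\setminus K$, and equality is attained at $(u,0)$. We must also note that $k>0$ is guaranteed, which is exactly where $C\neq0$ is used.
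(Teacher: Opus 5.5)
Your proposal is correct and follows essentially the same route as the paper. Both take $K$ from \cref{self-dual-exchange}, read off $k=N-2\dim K=2\dim C$, and combine \cref{equation-CSS-distance} with \cref{equation-completion-dual-distance} to get $d(Q)=2d(C)$. Your remarks that $C\neq0$ guarantees $k>0$, and that $d(K)>2H$ bounds the symplectic stabilizer weight, only make explicit what the paper leaves implicit.
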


\begin{proof}
Start with  $K$ from  \cref{self-dual-exchange} and let
$Q=\operatorname{CSS}(K,K)$. Our construction guarantees that
$K$ is self-orthogonal, $d(K)>2H$, and
$\dim K=N/2-\dim C$.
\cref{quantum-preliminaries} shows that
$k=N-2\dim K$ and $k=2\dim C$.
Therefore, using \cref{equation-CSS-distance,equation-completion-dual-distance},
gives us $d(Q)=2d(C)$.
The reduction is deterministic and efficient \cref{structural-guarantees}.
\end{proof}

%%%%%%%%%%%%%%%%%%%%%%%%%%%%%%%%%%%%%%%%%%%%%%%
\begin{proof}[Proof of \Cref{thm-intro-quantum-hardness}]
Pick a constant $\lambda>1$ and start with an Austrin-Khot hard
instance $(C,t)$, where $C\leq\F^m$, from
\cref{classical-consequences}.
They show that, for some constant $\gamma>1$, it is NP-hard to distinguish
$d(C)\leq t$ from $d(C)>\gamma t$, even when the dimension and distance
of $C$ are linear in $m$ (i.e., $\dim C=\Omega(m)$ and
$d(C)=\Omega(m)$).
We now apply \cref{mindist-CSS} to $C$ with
$H=2\lceil\lambda\rceil m$. Observe that
$Q=\operatorname{CSS}(K,K)$ acts on
$N=\Theta(m)$ qubits.
Moreover,
$k=2\cdot\dim C=\Omega(N)$ and
$d(Q)=2\cdot d(C)=\Omega(N)$.
Set $T=2t$. In the YES case, $d(Q)\leq T$. In the NO case,
$d(Q)>\gamma T$, and hence
$$
d(Q)-T>
\left(1-\frac1\gamma\right)d(Q)
=\Omega(N)
$$
So we can find $c,r,\delta>0$ so that
$k\geq rN$ and $d(Q)\geq\delta N$, with an additive hardness gap of $cN$.
Moreover, every nontrivial stabilizer has weight $\geq d(K)$.
Since $C\leq\F^m$, its distance is at most $m$. Since we set
$H=2\lceil\lambda\rceil m$ and have $d(Q)=2d(C)$, this guarantees that
$d(K)>H\geq\lambda\cdot d(Q)$.
Construction of $Q$ is poly$(N)$-time and deterministic.
\end{proof}

%%%%%%%%%%%%%%%%%%%%%%%%%%%%%%%%%%%%%%%%%%%%%%%%%%%%%%%%%%%%%%%%%%%%
%%%%%%%%%%%%
\subsection{Application to Graph-state distance}
\label{graphs}

By \cref{self-dual-graph-distance}, a self-dual code and its graph
have the same distance. Combining this fact with our completion
gives the following:

\begin{theorem}[Exact reduction to graph-state distance]
\label{graph-exact-reduction}
For nonzero $C\leq\F^m$, one can construct in
poly$(m)$ time a simple balanced bipartite graph on
$N=\Theta(m)$ vertices with adjacency matrix $\Gamma$ that satisfies
$\Gamma^2=I$ and $d(\Gamma)=2d(C)$.
\end{theorem}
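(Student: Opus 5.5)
The plan is to compose the metric self-dual completion with the code-to-graph map of \cref{self-dual-graph-distance}. Given a nonzero $C\leq\F^m$, apply \cref{thm-main} with $H=2m$. This produces, in deterministic poly$(N)$ time, a self-dual $A=A^\perp\leq\F^N$ with $N=\Theta(H)=\Theta(m)$. Clauses~(\ref{main-clause-1}) and~(\ref{main-clause-3}) give $d(A)=2d(C)$; this is recorded as \cref{equation-completion-distance}, and it uses $C\neq0$. Since poly$(N)=$ poly$(m)$, the construction of $A$ runs in poly$(m)$ time.

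Next, feed $A$ into \cref{self-dual-graph-distance}. Self-duality gives $\dim A=r=N/2$, and $0<N/2<N$ because $N\geq 2m\geq2$, so the lemma applies. Gaussian elimination, after a coordinate permutation, puts a parity-check matrix of $A$ into systematic form $[I_{N/2}\mid R]$ with $R$ square of size $N/2$. The lemma then yields a simple bipartite graph $\Gamma$ on $N$ vertices with parts of size $N/2$, so the graph is balanced, and
\[
d(\Gamma)=\min\{d(A),d(A^\perp)\}=d(A)=2d(C).
\]
A coordinate permutation does not change weights or self-duality, so there is no issue there.

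It remains to check $\Gamma^2=I$, which I expect to be the only point needing real care. Because $A=A^\perp$, the parity-check matrix $P=[I\mid R]$ is also a generator matrix of $A$. Self-orthogonality gives $PP^\top=I+RR^\top=0$, so $RR^\top=I$ over $\F$. Since $R$ is square, $R^\top$ is its two-sided inverse, and therefore $R^\top R=I$ as well. Then
\[
\Gamma^2=\begin{pmatrix}RR^\top&0\\0&R^\top R\end{pmatrix}=I .
\]
Together with the poly$(m)$ running time of both steps, this completes the plan.
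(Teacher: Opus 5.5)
Your proposal is correct and follows the paper's proof: you apply \cref{thm-main} with $H=2m$ to get a self-dual $A$ of length $N=\Theta(m)$ with $d(A)=2d(C)$, and then invoke \cref{self-dual-graph-distance}, where balance and $\Gamma^2=I$ follow from $R$ being square with $RR^\top=I$, as in the remark after that lemma. Your explicit checks that $0<N/2<N$ and that $R^\top R=I$ follows from $RR^\top=I$ for square $R$ only spell out details the paper leaves implicit.
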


\begin{proof}
As in the previous case, apply \cref{thm-main}, using clauses~(\ref{main-clause-1}), (\ref{main-clause-3}), and~(\ref{main-clause-4}), with $H=2m$.
We get a self-dual $A$
of length $N=\Theta(m)$, and by
\cref{equation-completion-distance} $d(A)=2d(C)$.
\cref{self-dual-graph-distance} shows how to build a simple balanced
bipartite graph $G$ on $N$ vertices with adjacency matrix $\Gamma$
where $\Gamma^2=I$ and $d(\Gamma)=d(A)=2d(C)$.
Both constructions are deterministic and efficient.
\end{proof}

\begin{proof}[Proof of \Cref{graph-gap}]
We start with a hard self-dual instance $(A,T)$ of length $N$
from \Cref{self-dual-distance-hardness} and apply
\cref{self-dual-graph-distance}.
The reduction $(A,T)\mapsto(\Gamma,T)$ preserves both distance and length,
is efficient and deterministic. The resulting simple balanced bipartite
graph satisfies $\Gamma^2=I$.
Thus, the reduction preserves the linear additive gap and a constant
relative distance.
Finally, observe that if $c_0N$ is the source gap, an estimate within additive error
$c_0N/2$ distinguishes the YES and NO instances by comparison
with $T+c_0N/2$. Now, set $c=c_0/2$.
\end{proof}

%%%%%%%%%%%%%%%%%%%%%%%%%%%%%%%%%%%%%%%%%%%%%%%%
%%%%%%%%%%%%%%%%%% SECTION 5 %%%%%%%%%%%%%%%%%%%
%%%%%%%%%%%%%%%%%%%%%%%%%%%%%%%%%%%%%%%%%%%%%%%%%

\section{Linear-size self-dual hosts}
\label{hosts}

We now prove
\cref{explicit-self-dual-code}, which provides the host $M$ that we use in
\cref{completion}. We stress that this proof is independent of the completion and hardness arguments presented in the previous sections.
We start with a theorem of Shpilka and Guruswami which gives an explicit ``dual-good'' code with constant rate and constant
relative distance, whose dual also has a constant rate and
constant relative distance.
Let us first restate Theorem~4 in Shpilka's paper in our notation.
The construction and proof are given in Appendix~A of Shpilka's
paper, which states that the appendix was written by Guruswami:
\begin{theorem}[Shpilka-Guruswami, {\cite[Theorem~4]{Shpilka09}}]
\label{shpilka-guruswami-dual-good}
$\exists$ constant $\zeta\geq1/30$ such that, $\forall$
$j\geq1$, $\exists$ code $D_j\leq\F^{n_j}$
with $n_j=42\cdot8^{j+1}$, with $\dim D_j=n_j/2$, and
with $\min\{d(D_j),d(D_j^\perp)\}\geq\zeta n_j$.
Construction, encoding, and decoding from a fraction $\zeta/2$
of errors all take poly$(n_j)$ time and are deterministic.
\end{theorem}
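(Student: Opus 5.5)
Since this statement is quoted from \cite[Theorem~4]{Shpilka09}, the plan is to reconstruct a construction of the kind in Guruswami's appendix rather than to invent a new one. The principle is to build a binary code whose dual also has an explicit algebraic description, so that both distances can be bounded by the same argument. The first ingredient is an outer code over a large field $\mathbb F_q$, $q=2^s$, that is dual-good over $\mathbb F_q$: a generalized Reed--Solomon code of rate $1/2$. Its dual is again a generalized Reed--Solomon code of rate $1/2$, so both are MDS with relative distance about $1/2$. Because this alphabet grows with the length, I would instead use an explicit constant-alphabet family closed under duality, such as Garcia--Stichtenoth algebraic-geometry codes over $\mathbb F_{64}$. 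The dual of an AG code is again an AG code, and the Goppa bound gives constant relative distance to both sides. The factor $8^{j+1}$ in $n_j=42\cdot 8^{j+1}$ suggests a tower or recursive structure over an alphabet of size $8$ (or $8^2$), and I would try to match that.

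The second step is descent to $\F$ without losing the dual. I would fix a self-dual basis of $\mathbb F_q$ over $\F$, so that $\operatorname{Tr}(ab)$ equals the standard inner product of the coordinate vectors. With such a basis, the binary image of $D^\perp$ is exactly the dual of the binary image of $D$. Used alone, this only gives relative distance about $\zeta/s$, so I would concatenate with an inner map $\psi:\mathbb F_q\to\F^b$ satisfying $\langle\psi(a),\psi(b)\rangle=\operatorname{Tr}(ab)$. The inner code must be good both in its image and in a complementary sense. Since $s$ is constant, such a $\psi$ can be found by brute-force search in constant time, and the constant $42$ is plausibly the resulting inner block length or blow-up factor. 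Next I would describe the dual of the concatenated code as the $\psi$-image of $D^\perp$ plus, on each block, the words of $\psi(\mathbb F_q)^\perp$. Its distance would then be bounded by a standard case split: a word either has a nonzero outer projection, and then many blocks are nonzero, or it lies entirely in the per-block complements. The dimension count $\dim D_j=n_j/2$ follows from the outer rate $1/2$ together with a careful choice of inner dimensions; I would add padding if necessary.

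The main obstacle is the second case, namely the low-weight words lying inside a single block of $\psi(\mathbb F_q)^\perp$. These are exactly why naive concatenation is not dual-good. To handle them, I would try first to replace plain concatenation by a Tanner or expander-style construction (Sipser--Spielman/Z\'emor) built from a dual-good inner code. In that setting the dual of the Tanner code is the sum of the local duals, and expansion bounds the weight of every nonzero element of that sum. The remaining claims are routine. Encoding is by the generator matrix, construction is deterministic and polynomial time, and unique decoding from a $\zeta/2$ fraction of errors would come from the standard expander or GMD-type decoder for the chosen construction, applied with the distance bound $\zeta n_j$.
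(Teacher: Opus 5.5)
\textbf{There is a genuine gap: the proposal never produces a code meeting the distance bound.} The paper does not prove this statement; it imports it from Shpilka's Appendix~A as a black box, so your reconstruction has to stand on its own.

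\textbf{The wrong turn, and what would work.} The decisive error is in your second step. You have already named an outer code over the \emph{constant} field $\mathbb F_{64}$, so $s=6$. You then reject its plain binary image because it ``only gives relative distance about $\zeta/s$.'' With $s=6$ fixed, that is a constant, and the plain image is exactly what is needed.
\begin{itemize}
\item Let $\beta$ be an $\F$-basis of $\mathbb F_{64}$ and $\beta^*$ its trace-dual basis (or use a single self-dual basis). Then $\operatorname{Tr}(ab)$ equals the dot product of the $\beta$-coordinates of $a$ with the $\beta^*$-coordinates of $b$. Hence the image of $C^\perp$ under $\beta^*$ is exactly the dual of the image of $C$ under $\beta$.
\item A nonzero symbol always yields a nonzero $6$-bit block. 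So both binary distances are at least the corresponding symbol distances, and the binary dimension is $6\cdot n_{\mathrm{sym}}/2=n_j/2$.
\item The numerology fits this route: $42\cdot 8^{j+1}=6\cdot(q^2-q)q^{j}$ with $q=8$. That is six bits per symbol times the $(q^2-q)q^{j}$ rational places at level $j+1$ of the Garcia--Stichtenoth tower over $\mathbb F_{q^2}$. Your guess of a tower over $\mathbb F_{8^2}$ was right, but $42$ is not an inner block length.
\item Take the evaluation code $C_{\mathcal L}(\mathcal P,G)$ with $\deg G=n_{\mathrm{sym}}/2+g-1$. Then it and its dual $C_\Omega(\mathcal P,G)$ both have dimension $n_{\mathrm{sym}}/2$ and distance at least $n_{\mathrm{sym}}/2-g+1$.
\item Since $g\leq q^{j+1}$, we get $g/n_{\mathrm{sym}}\leq 1/(q-1)=1/7$. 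This gives binary relative distance at least $5/84>1/30$.
\item What remains is to invoke the known polynomial-time construction of these tower codes and a polynomial-time AG decoder. A $\zeta/2$ fraction of bit errors corrupts at most a $3\zeta$ fraction of symbols.
\end{itemize}

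\textbf{Why the route you commit to fails.} The proposed repair fails for the very reason you identify.
\begin{itemize}
\item With an inner map $\psi:\mathbb F_q\to\F^b$ of rate $s/b<1$, any nonzero word of $\psi(\mathbb F_q)^\perp$ placed in one block is orthogonal to the whole concatenated code. So the dual distance is at most $b=O(1)$.
\item A Tanner code $T=\{x: x|_{E(v)}\in C_0 \text{ for all } v\}$ on a $\Delta$-regular expander has the same defect. Any nonzero $y\in C_0^\perp$ placed on the $\Delta$ edges at a single vertex already lies in $T^\perp$. Hence $d(T^\perp)\leq\Delta=O(1)$ no matter how good the expansion is.
\item Your claim that expansion lower-bounds every nonzero element of $\sum_v C_0^\perp$ is therefore false. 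These are LDPC-type codes, whose duals contain their sparse checks.
\item Padding with zero coordinates to adjust the dimension is likewise fatal: each padded coordinate puts a weight-one vector into the dual. It would also break the exact length $42\cdot 8^{j+1}$.
\end{itemize}
Consequently the proposal never exhibits a code with $\min\{d(D_j),d(D_j^\perp)\}\geq\zeta n_j$. The expander/GMD decoding claims are attached to constructions that do not meet the distance requirement in the first place.
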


The statement in Shpilka's paper denotes the two relative distances by
$\eta,\eta'\geq\zeta$.
For our purposes, set $\delta=1/30$ and observe that distances are at least $\delta n_j$, and
$n_{j+1}=8n_j$.
We compute a basis of $D_j$ by encoding the standard basis
vectors of $\F^{n_j/2}$.
We do not need the
decoding algorithm, and we do not make an assumption that
$D_j$ contains the all-one word.
Instead, we take the Shpilka-Guruswami theorem as a black box theorem and modify
any such dual-good code in two modular steps using elementary linear algebra:

\begin{enumerate}
\item We first show how to put an all-ones word into any dual-good code without losing its dual-good property.
The idea is to append the code and its dual, then apply
an invertible linear map to each coordinate pair.
\item Next, we show how, starting with any dual-good code containing an all-ones vector, we can construct a self-dual code with a constant rate and constant relative distance. The idea is to use prefix sums to obtain self-duality.
\end{enumerate}

Each step of this preliminary construction preserves a lower bound on
distances proportional to the original length and doubles the length of the code.
We denote $\ones_n$ for the all-ones vector in $\F^n$.

%%%%%%%%%%%%%%%%%%%%%%%%%%%%%%%%%%%%%%%%%%%%%%%%%%%%%%%%
\subsection{Putting the all-ones word into any dual-good code.}

The following normalization preserves the smaller of the code and
dual distances.  Swapping the two coordinate blocks of the constructed
code gives its dual.

\begin{lemma}[All-ones normalization]
\label[lemma]{all-ones-symmetrization}
For every code $D\leq\F^n$, where $n\geq1$, one can efficiently construct
a code $E\leq\F^{2n}$ such that
$\dim E=n$, $\ones_{2n}\in E$, and
$$\min\{d(E),d(E^\perp)\}\geq\min\{d(D),d(D^\perp)\}$$
\end{lemma}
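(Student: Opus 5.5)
The plan is to build $E$ by pairing coordinates of $D$ and $D^\perp$ and then applying an invertible $2\times2$ map to each coordinate pair. The map may differ from coordinate to coordinate, and it is chosen so that one specific element lands on $\ones_{2n}$. The starting space is $E_0\defeq\{(a_i,b_i)_{i\le n}: a\in D,\ b\in D^\perp\}$, viewed as a code of length $2n$ with coordinates grouped into $n$ pairs. Since $\dim D+\dim D^\perp=n$, we have $\dim E_0=n$. Its dual, with respect to the standard form on $\F^{2n}$, is the analogous space built from $D^\perp\times D$. This follows because $E_0\leq (D^\perp\times D)^\perp$ and the dimensions match.

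\textbf{Step 1 (where the all-ones word comes from).} First I show $\ones_n\in D+D^\perp$. Since $(D+D^\perp)^\perp=D\cap D^\perp$, it suffices to show that $\ones_n$ is orthogonal to every $w\in D\cap D^\perp$. Such a $w$ satisfies $w^\top w=0$, so $\wt(w)$ is even, and therefore $\ones_n^\top w=\wt(w)\bmod 2=0$. Hence we can write $\ones_n=a_0+b_0$ with $a_0\in D$ and $b_0\in D^\perp$. For each $i$, the pair $(a_{0,i},b_{0,i})$ is then either $(1,0)$ or $(0,1)$. These $a_0,b_0$ are found by Gaussian elimination.

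\textbf{Step 2 (choosing the pair maps).} For each coordinate $i$, pick $L_i\in GL_2(\F)$ as follows. If $(a_{0,i},b_{0,i})=(1,0)$, take $L_i=\begin{pmatrix}1&0\\1&1\end{pmatrix}$. If $(a_{0,i},b_{0,i})=(0,1)$, take $L_i=\begin{pmatrix}1&1\\0&1\end{pmatrix}$. In both cases $L_i$ sends the pair to $(1,1)$. Set $E\defeq L(E_0)$, where $L=\bigoplus_i L_i$ acts blockwise. Then $\dim E=n$, and $\ones_{2n}=L(a_0,b_0)\in E$. Since $\langle Lu,v\rangle=\langle u,L^\top v\rangle$ and $L$ is invertible, we get $E^\perp=(L^{-1})^\top(E_0^\perp)$. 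Here $(L^{-1})^\top$ is again block-diagonal with invertible $2\times2$ blocks, and $E_0^\perp$ is the $D^\perp\times D$ pairing.

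\textbf{Step 3 (distance).} An invertible map on a pair sends a nonzero pair to a nonzero pair. So for any block-invertible image of a nonzero $(a,b)$, the Hamming weight is at least the number of nonzero pairs. That number is at least $|\supp(a)\cup\supp(b)|\geq\max\{\wt(a),\wt(b)\}$. If $a\neq0$ this is at least $d(D)$, and otherwise $b\neq0$ gives at least $d(D^\perp)$. Applying this to both $E$ (pairing $D$ with $D^\perp$) and $E^\perp$ (pairing $D^\perp$ with $D$) gives $\min\{d(E),d(E^\perp)\}\geq\min\{d(D),d(D^\perp)\}$. The degenerate cases $D=0$ or $D=\F^n$ are covered by the convention $d(0)=+\infty$. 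Everything is done by Gaussian elimination, so the construction is efficient.

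The main obstacle is Step 1: guaranteeing a decomposition $\ones_n=a_0+b_0$ even when $D\cap D^\perp\neq0$. The even-weight argument for self-orthogonal vectors settles it. The other point needing care is letting $L_i$ depend on $i$, which is harmless because every distance bound above works pair by pair.
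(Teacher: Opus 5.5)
Your proposal is correct and follows essentially the paper's route. The paper's code $E=\{(ax+by,\,x+y):x\in D,\ y\in D^\perp\}$, built from $a+b=\ones_n$, is exactly a coordinatewise invertible $2\times2$ transform of $D\times D^\perp$ sending $(a,b)\mapsto(\ones_n,\ones_n)$, and its distance bound is your pair-counting argument. The only cosmetic difference is in handling the dual: the paper checks that $E^\perp$ is the block-swap of $E$, so $d(E^\perp)=d(E)$, whereas you obtain $E^\perp$ as $(L^{-1})^\top$ applied to the $D^\perp\times D$ pairing and bound it separately.
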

\begin{proof}
For every $h\in D\cap D^\perp$, we have
$\ones_n^\top h=h^\top h=0$ (in $\F$). Thus,
$\ones_n\in(D\cap D^\perp)^\perp=(D+D^\perp)$.
We solve for $a\in D$ and $b\in D^\perp$ with $a+b=\ones_n$, and define $E$ where multiplication is coordinatewise
as:
$$
 E=\{(ax+by,\,x+y):x\in D,\ y\in D^\perp\}
$$
Observe that from any output $(r,s)=(ax+by,x+y)$, we can recover the input $x=r+bs$ and $y=r+as$.
Thus, the map is linear and injective.  Therefore
$\dim E=n=\dim D+\dim D^\perp$.
If we set $(x,y)=(a,b)$ this gives $(r,s)=(\ones_n,\ones_n)$. Therefore
$\ones_{2n}\in E$. Now, we can show that swapping the two blocks of $E$ gives its dual: consider any two
output pairs $(r,s)$ and $(r',s')$ that came from two input pairs $(x,y)$ and $(x',y')$.
Expanding and using $a+b=\ones_n$ gives:
$$
 r^\top s'+s^\top r'=x^\top y'+y^\top x' = 0 + 0
$$
Therefore, the code obtained when we swap the two blocks of $E$ is contained in
 $E^\perp$.  Since both have dimension $n$,
$$
 E^\perp=\{(s,r):(r,s)\in E\}
$$
and
 $d(E^\perp)=d(E)$. Now, observe that
$$
 \wt(r,s)=\wt(r)+\wt(s)\geq\max\{\wt(x),\wt(y)\}
$$
This follows from $x=r+bs$ and $y=r+as$.
When $(r,s)\ne0$, at least one of $x,y$ is nonzero; therefore,
$\wt(r,s)\geq\min\{d(D),d(D^\perp)\}$.  This means that both
 $E$ and $E^\perp$ have a distance of at least $\min\{d(D),d(D^\perp)\}$.
\end{proof}

%%%%%%%%%%%%%%%%%%%%%%%%%%%%%%%%%%%%%%%%%%%%%%
\subsection{The prefix-sum construction.}

We next turn a code $E$ containing the all-ones word into a
self-dual code. We repeat each coordinate of a word in $E$ and
add consecutive prefix sums of a word in $E^\perp$.
Summing each output pair recovers the latter word.
Let $E\leq\F^n$ and assume that $\ones_n\in E$. For $w\in E^\perp$, we define a prefix sum of all bits of $w$:
$$s_0 \defeq  s_0(w)=0$$
and for $1\leq i\leq n$:
$$s_i \defeq s_i(w)=\sum_{j=1}^i w_j$$
Observe that $s_0=0$ by definition. Since $w\in E^\perp$, it  is orthogonal to $\ones_n$, and thus $s_n(w)=0$. Define $\Phi:E\times E^\perp\to\F^{2n}$ by
$$\Phi(c,w) \defeq \left(\Phi_1(c,w),\ldots,\Phi_n(c,w)\right)$$
where
$$
 \Phi_i(c,w)=\bigl(c_i+s_{i-1}(w),\,c_i+s_i(w)\bigr)
$$
The output consists of $n$ pairs, iterating over coordinates. %
Let $P=\operatorname{im}\Phi$.  Our new code is
$$P=\Phi(E,E^\perp)$$

\begin{lemma}[Prefix-sum self-dual construction]
\label[lemma]{prefix-sum-construction}
$\forall$ codes $E\leq\F^n$ s.t. $\ones_n\in E$,  $\forall$ $n\geq1$,
the map $\Phi:E\times E^\perp\to P$ is an invertible linear map. $\forall$
$c,c'\in E$ and $w,w'\in E^\perp$,
$$\Phi(c,w)^\top\Phi(c',w')=c^\top w'+w^\top c'$$
Moreover, code $P$ is self-dual and has distance \;
$d(P)\geq\min\{d(E^\perp),\,2d(E)\}$.
\end{lemma}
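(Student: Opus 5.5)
The plan is to verify each claim of \cref{prefix-sum-construction} directly from the definition of $\Phi$: linearity and injectivity first, then the inner-product identity, then self-duality by a dimension count, and finally the distance bound by a case split on whether $w=0$.

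\textbf{Linearity and invertibility.} Each prefix sum $s_i(w)$ is a linear functional of $w$, so $\Phi$ is linear. For injectivity, add the two bits of the $i$-th output pair: $(c_i+s_{i-1})+(c_i+s_i)=s_{i-1}+s_i=w_i$. So $w$ is recovered from $\Phi(c,w)$. Knowing $w$, we know $s_{i-1}(w)$, and the first bit of pair $i$ then gives $c_i$. Hence $\Phi$ is injective, and it is a bijection onto its image $P$. Moreover
$$\dim P=\dim E+\dim E^\perp=n,$$
which is half the length $2n$.

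\textbf{Inner product.} Write $s_i=s_i(w)$ and $s_i'=s_i(w')$, and expand the contribution of pair $i$:
$$(c_i+s_{i-1})(c_i'+s_{i-1}')+(c_i+s_i)(c_i'+s_i').$$
This splits into three kinds of terms.
\begin{itemize}
\item The $c_ic_i'$ term appears twice, so it vanishes mod $2$.
\item The cross terms give $c_i(s'_{i-1}+s'_i)+c'_i(s_{i-1}+s_i)=c_iw'_i+c'_iw_i$.
\item The remaining terms are $s_{i-1}s'_{i-1}+s_is'_i$.
\end{itemize}
Summing the third kind over $i=1,\dots,n$ telescopes in pairs. Each $s_js'_j$ with $1\le j\le n-1$ appears exactly twice, leaving only $s_0s'_0+s_ns'_n$. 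This is $0$ because $s_0=0$ by definition and $s_n(w)=\ones_n^\top w=0$, the latter using $\ones_n\in E$ and $w\in E^\perp$. This yields $\Phi(c,w)^\top\Phi(c',w')=c^\top w'+w^\top c'$.

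\textbf{Self-duality.} Since $c\in E$ and $w'\in E^\perp$ (and symmetrically for $c'$ and $w$), both terms $c^\top w'$ and $w^\top c'$ vanish. So $P$ is self-orthogonal. Combined with $\dim P=n$, this gives $P=P^\perp$.

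\textbf{Distance.} Take a nonzero $\Phi(c,w)$ and split into two cases.
\begin{itemize}
\item If $w=0$, the output is $(c_1,c_1,\dots,c_n,c_n)$. Then $c\neq0$, so the weight is $2\wt(c)\ge2d(E)$.
\item If $w\neq0$, the two bits of pair $i$ differ exactly when $w_i=1$, so each such pair contributes at least one to the weight. The weight is therefore at least $\wt(w)\ge d(E^\perp)$.
\end{itemize}
Together the cases give $d(P)\ge\min\{d(E^\perp),2d(E)\}$.

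The only step needing care is the telescoping in the inner-product identity. That is exactly where both boundary conditions $s_0=0$ and $s_n=0$ are used, and hence where the hypothesis $\ones_n\in E$ enters. Everything else is routine bookkeeping.
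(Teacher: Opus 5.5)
Your proof is correct and follows essentially the same route as the paper. Both arguments recover $w$ and then $c$ from the pair sums to get injectivity, use the inner-product identity plus a dimension count for self-duality, and split on $w=0$ for the distance bound. The only difference is bookkeeping: the paper writes $\Phi(c,w)=\Phi(c,0)+\Phi(0,w)$ and computes four inner products, while you expand pair by pair and telescope, using $s_0=0$ and $s_n(w)=0$ at the same point.
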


\begin{proof}
$ $

\noindent
\textbf{Linearity and dimension:} $\Phi$ is linear because prefix sums and coordinate addition are linear. To prove that it is one-to-one, consider an output
$z=\Phi(c,w)$. From $z$, we can use the prefix-sum structure to recover $w$ and then recover $c$: $w_i=z_{2i-1}+z_{2i}$ and $c_i=z_{2i-1}+s_{i-1}(w)$.
Thus,  $\dim P=\dim E+\dim E^\perp$ which is $n$.

%%%%%%%%%%%%%%%%%%%%%%%%%%%%%%%%%%%%%%%%%%%%%%%%%%%%%%%%%%%%%%
\medskip
\noindent
\textbf{Self-orthogonality:} Let $c,c'\in E$ and $w,w'\in E^\perp$.
Now, we want to show that it is always the case that $\Phi(c,w)^\top\Phi(c',w')=0$.
Linearity of $\Phi$ allows us to split arguments as:
$$\Phi(c,w)=\Phi(c,0)+\Phi(0,w)$$ where type one is:
$$\Phi(c,0)=(c_1,c_1,\ldots,c_n,c_n)$$
and type two is:
$$\Phi(0,w)=(0,s_1(w),s_1(w),\ldots,s_{n-1}(w),s_{n-1}(w),0)$$

For  $\Phi(c,0)^\top\Phi(c',0)$  each coordinatewise product $c_ic'_i$ occurs
twice, and therefore cancels over $\F$.
For $\Phi(0,w)^\top\Phi(0,w')$ , each coordinatewise product
$s_i(w)s_i(w')$, for $1\leq i<n$, also occurs twice, in positions $(2i,2i+1)$, and the first and last points contribute zero, which results in cancellation as well.

Now, let's consider cross-products.
For $\Phi(c,0)^\top\Phi(0,w')$,  output pair $i$ contributes
$$c_i(s_{i-1}(w')+s_i(w'))=c_iw'_i$$ Hence
$$
 \Phi(c,0)^\top\Phi(0,w')=c^\top w'=0
$$
Observe that the last equality holds since $c\in E$ and $w'\in E^\perp$.
The same argument applies to the other cross-product pair $\Phi(0,w)^\top\Phi(c',0)=w^\top c'$.
Finally, expanding the dot product of two complete outputs gives four terms, where
$$
 \Phi(c,w)^\top\Phi(c',w')=c^\top w'+w^\top c'=0
$$
Thus $P\subseteq P^\perp$. Finally, since we already showed that $\dim P=n$ and
the length of $P$ is $2n$, we have $\dim P^\perp=2n-n=n$, so $P=P^\perp$.
%%%%%%%%%%%%%%%%%%%%%%%%%%%%%%%%%%%%%%%%%%%%%%%%%%

\medskip
\noindent
\textbf{Distance.} Take a nonzero output $z=\Phi(c,w)$.
If $w\ne0$, for coordinates where  $w_i=1$,  pair $i$ has the form:
$$(c_i+s_{i-1}(w),c_i+s_{i-1}(w)+1)$$
which forces this pair to be either $(1,0)$ or $(0,1)$ and therefore
$$
 \wt(z)\geq\wt(w)\geq d(E^\perp)
$$
On the other hand, if $w=0$, every prefix sum is zero, so $z$ repeats each
coordinate of $c$.  We know that $z\ne0$, we have $c\ne0$, and $c\in E$ contributes to the weight:
$$
 \wt(z)=2\wt(c)\geq2d(E)
$$
These two cases prove the distance bound.

\medskip
\noindent
\textbf{Construction time.} Gaussian elimination computes a basis of
$E^\perp$ from a basis of $E$. Applying $\Phi$ to the pairs
$(u,0)$ and $(0,v)$, where $u$ and $v$ range over these two bases,
produces a basis of $P$ by injectivity. All algorithms are deterministic and polynomial-time.
\end{proof}

%%%%%%%%%%%%%%%%%%%%%%%%%%%%%%%%%%%%%%%%%%%%%%%%%%%%%%%%%%%
\subsection{Constructing linear-size self-dual codes}

We can now go directly from dual-good to self-dual codes:
\begin{proposition}[From dual-good to self-dual codes]
\label[proposition]{dual-good-to-self-dual}
Let $\delta>0$.
Given $D\leq\F^n$, with both $d(D)\geq\delta n$, and $d(D^\perp)\geq\delta n$.
We can deterministically and in poly($n$)-time construct a self-dual code $P\leq\F^{4n}$ with $\dim P=2n$ and $d(P)\geq\delta n$.
Thus $P$ has rate $1/2$ and relative distance $\geq\delta/4$. \end{proposition}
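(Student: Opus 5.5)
The plan is to chain the two modular steps already proved. First apply \cref{all-ones-symmetrization} to $D\leq\F^n$ to obtain $E\leq\F^{2n}$ with $\dim E=n$, $\ones_{2n}\in E$, and
$$\min\{d(E),d(E^\perp)\}\geq\min\{d(D),d(D^\perp)\}\geq\delta n.$$
This step is deterministic and polynomial time: it needs Gaussian elimination to get a basis of $D^\perp$, to solve $a+b=\ones_n$ with $a\in D$, $b\in D^\perp$, and to write down a basis of $E$. The solution exists by the argument in that lemma's proof, since $\ones_n\in D+D^\perp$.

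Next apply \cref{prefix-sum-construction} to $E$, with length $2n$ in place of $n$. The hypothesis $\ones_{2n}\in E$ holds. The lemma then gives a self-dual $P=\Phi(E,E^\perp)\leq\F^{4n}$. Its dimension is $\dim E+\dim E^\perp=2n$. Its distance satisfies
$$d(P)\geq\min\{d(E^\perp),2d(E)\}\geq\min\{\delta n,2\delta n\}=\delta n.$$
The lemma also says that constructing $P$ is deterministic and polynomial time: compute a basis of $E^\perp$ and apply $\Phi$ to the basis pairs. The composition of the two steps is therefore poly$(n)$.

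Finally, read off the parameters. The rate is $\dim P/(4n)=1/2$, and the relative distance is at least $\delta n/(4n)=\delta/4$. There is no real obstacle, since both lemmas are already established. The only point needing care is matching the hypotheses: \cref{prefix-sum-construction} needs the all-ones word of the \emph{current} length $2n$, which is exactly what \cref{all-ones-symmetrization} guarantees. One should also note that the bound $d(P)\geq\min\{d(E^\perp),2d(E)\}$ uses both $d(E)$ and $d(E^\perp)$, which is why the symmetric guarantee on $\min\{d(E),d(E^\perp)\}$ is required rather than a bound on $d(E)$ alone.
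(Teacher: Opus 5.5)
Your proposal is correct and is essentially the paper's proof: apply \cref{all-ones-symmetrization} to get $E\leq\F^{2n}$ with $\ones_{2n}\in E$ and $\min\{d(E),d(E^\perp)\}\geq\delta n$, then apply \cref{prefix-sum-construction} to get a self-dual $P\leq\F^{4n}$ with $d(P)\geq\min\{d(E^\perp),2d(E)\}\geq\delta n$. You also spell out the running-time accounting, which the paper's proof leaves implicit.
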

\begin{proof}
Apply all-ones \cref{all-ones-symmetrization} to $D$ to get $E\leq\F^{2n}$, where $\ones_{2n}\in E$, and
$\min\{d(E),d(E^\perp)\}\geq\min\{d(D),d(D^\perp)\}$.
Then apply prefix-sum
\cref{prefix-sum-construction} to $E$ to get
a self-dual $P\leq\F^{4n}$ with
$d(P)\geq\min\{d(E^\perp),2d(E)\}\geq\min\{d(D),d(D^\perp)\}$.
Since $P$ is self-dual, $\dim P=2n$, which gives us $1/2$ rate.
The distance bound above gives $d(P)\geq\delta n$. Therefore, the relative
distance  $\geq \delta/4$.
\end{proof}

%%%%%%%%%%%%%%%%%%%%%%%%%%%%%%%%%%%%%%%%%%%%%%%%%%%%%%%%%%%%%

\noindent
We now apply \cref{dual-good-to-self-dual} to Shpilka-Guruswami~\cite{Shpilka09} to construct:

\begin{proof}[Proof of \Cref{explicit-self-dual-code}.]
Apply \cref{shpilka-guruswami-dual-good} to construct dual-good code
$D_j\leq\F^{n_j}$ with \linebreak
$\min\{d(D_j),d(D_j^\perp)\}\geq\delta n_j$.
Choose the smallest $j\geq1$ such that $\delta n_j\geq\Delta$.
Shpilka-Gurswami guarantee that $n_j=\Theta(\Delta)$.
(to see this, observe that  $\delta$ and the first available length are constants,
$n_{j+1}=8n_j$,  and for smallest $j$, we have
$n_j=\Theta(\Delta)$).
Now,
apply (to $D_j$) the transformation from dual-good to self-dual codes from  \cref{dual-good-to-self-dual}.
This yields a self-dual $P$ of length $N=4n_j$ with
$d(P)\geq\min\{d(D_j),d(D_j^\perp)\}\geq\delta n_j\geq\Delta$.
Thus,  $4\mid N$ and $N=\Theta(\Delta)$ are as stated.
Both Guruswami's construction of $D_j$ and our conversions are deterministic
and poly($N$) time.
\end{proof}

%%%%%%%%%%%%%%%%%%%%%%%%%%%%%%%%%%%%%%%%%%%%%%%%%%%
%%%%%%%%%%%%%%%%%% ACKS %%%%%%%%%%%%%%%%%%%%%%%%%%%%
%%%%%%%%%%%%%%%%%%%%%%%%%%%%%%%%%%%%%%%%%%%%%%%%%%%
\bigskip
\section*{Acknowledgments}

The mathematical ideas, results, proofs, and exposition in this paper are entirely the author's. After completing the manuscript, the author used OpenAI's ChatGPT only as a final check for possible mathematical inconsistencies and typographical errors. The tool identified a few typographical errors but no errors in the mathematical arguments. The author takes full responsibility for the contents of this paper.

%%%%%%%%%%%%%%%%%%%%%%%%%%%%%%%%%%%%%%%%%%%%%%%%%%%%%
%%%%%%%%%%%%      REFERENCES %%%%%%%%%%%%%%%%%%%%%%%%%%
%%%%%%%%%%%%%%%%%%%%%%%%%%%%%%%%%%%%%%%%%%%%%%%%%%%%%%

\newpage
\bibliographystyle{alpha}
\bibliography{references}

%%%%%%%%%%%%%%%%%%%%%%%%%%%%%%%%%%%%%%%%%%%%%%%%%%%%%%%%%%
%%%%%%%%%%%%%%%%%%%%%%%%  APPENDIX %%%%%%%%%%%%%%%%%%%%%%%
%%%%%%%%%%%%%%%%%%%%%%%%%%%%%%%%%%%%%%%%%%%%%%%%%%%%%%%%%%
%\newpage

\bigskip
\appendix
\titleformat{\section}
  {\normalfont\Large\bfseries}
  {Appendix~\thesection}{1em}{}
\section{Type I and Type II metric self-dual completions}
\label[appendix]{typed-realizations}
We give two generalizations of clauses~(\ref{main-clause-1})-(\ref{main-clause-4}) of our main theorem: to Type I and Type II codes. These generalizations are not used in the main body of the paper, but we present them here for completeness.
 Recall that a code is doubly even if every word weight is divisible by four.
\subsection{The completion and its recovery map}
Let  $m,s,H$  be positive integers, where $H\geq s\cdot m$.
Similar to our \cref{structural-guarantees}, let $M\leq\F^N$ be self-dual
and $\tau:\F^m\to\F^N$ be an injective linear map.
Let $W\defeq\tau(\F^m)$ be self-orthogonal, with
$\wt(\tau(x))=s\cdot\wt(x)$.
Assume that $M$ is protected through $H$, namely,
\begin{equation}
 \wt(u+\tau(x))>H
 \qquad
 \text{for every $u\in M\setminus\{0\}$ and $x\in\F^m$}
 \label{equation-typed-protection}
\end{equation}
The same triangle inequality as in
\cref{equation-main-separation} shows that $d(M)>2H$ is sufficient
for \cref{equation-typed-protection}.
For $C\leq\F^m$, we use the same construction as in our main theorem clauses~(\ref{main-clause-1})-(\ref{main-clause-4}):
$$K\defeq M\cap\tau(C)^\perp$$
and
$$A\defeq K+\tau(C)$$
%%%%%%%%%%%%%%%%%%%%%%%%%%%%%%%
\begin{lemma}[Completion and recovery]
\label[lemma]{metric-exchange}
As in \cref{thm-main}, $A$ is self-dual, and $\tau$ induces
isomorphism
$\F^m/C\to(A+W)/A$
that increases all coset distances by $s$ factor.
The following properties hold simultaneously:
\begin{enumerate}
\renewcommand{\theenumi}{\roman{enumi}}
\renewcommand{\labelenumi}{(\theenumi)}
\item\label{generalized-main-clause-1}
For every $y\in\F^m$ and $0\leq r\leq H$,
\begin{equation}
 \{p\in\tau(y)+A:\wt(p)\leq r\}
 =\tau\bigl(\{x\in y+C:\wt(x)\leq\lfloor r/s\rfloor\}\bigr)
 \label{equation-protected-ball}
\end{equation}
Thus, if $C\neq0$, then $d(A)=s\cdot d(C)$, and if $C=0$, then
$d(A)>H$.
\item\label{generalized-main-clause-2}
Every $p\in A+W=K\oplus W$ has a unique expression
$p=u+\tau(x)$, with $u\in K$ and $x\in\F^m$.
\item\label{generalized-main-clause-3}
The linear ``recovery'' map $\rho:A+W\to\F^m$
\begin{equation}
 \rho(u+\tau(x))\defeq x
 \label{equation-recovery-map}
\end{equation}
guarantees $\rho\circ\tau=\operatorname{id}_{\F^m}$ and
$A=\rho^{-1}(C)$. Note that $\rho$ induces the inverse quotient isomorphism.
For every $p\in A+W$,
$p+\tau(\rho(p))\in A$ and $s\cdot\wt(\rho(p))\leq\wt(p)$.
\item\label{generalized-main-clause-4}
If $M$ and $W$ are doubly even, then $A$ is Type II.
\item\label{generalized-main-clause-5}
If $M\cap W^\perp$ contains a word of weight $2\bmod4$,
then $A$ is Type I.
\item\label{generalized-main-clause-6}
Given $M,\tau,C$, the completion $A$ and recovery map $\rho$ are
computable by Gaussian elimination in poly($N$).
\end{enumerate}
\end{lemma}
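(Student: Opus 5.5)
The plan is to rerun the argument of \cref{self-dual-exchange} and \cref{coset-guarantees} in this more general setting. The only changes are that the weight factor is $s$ instead of $2$, and that the protection hypothesis \cref{equation-typed-protection} replaces $d(M)>2H$.

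\textbf{Self-duality and unique decomposition.} First I would show that $M\cap W=0$. If $u\in M\cap W$ with $u\neq0$, then $u=\tau(x)$ for some $x$, so $u+\tau(x)=0$ has weight $0\leq H$, which contradicts \cref{equation-typed-protection}. Self-orthogonality of $K\leq M$ and of $\tau(C)\leq W$, together with $K\perp\tau(C)$, then makes $A=K\oplus\tau(C)$ self-orthogonal. Next, $K=(M+\tau(C))^\perp=(M\oplus\tau(C))^\perp$, so $\dim K=N/2-\dim C$ and hence $\dim A=N/2$, exactly as in \cref{self-dual-exchange}. The same direct-sum argument gives $A+W=K\oplus W$. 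Combined with injectivity of $\tau$, this yields clause~(\ref{generalized-main-clause-2}).

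\textbf{Recovery map and ball identity.} I would define $\rho$ by \cref{equation-recovery-map}; it is well defined and linear by clause~(\ref{generalized-main-clause-2}), and $\rho\circ\tau=\mathrm{id}$ is immediate. To get $A=\rho^{-1}(C)$, take $p=u+\tau(x)$. Then $p\in A$ iff $\tau(x)\in A$ (since $u\in K\leq A$), iff $x\in C$ by uniqueness of the decomposition $A=K\oplus\tau(C)$. Next, $p+\tau(\rho(p))=u\in K\leq A$. For the weight bound, if $u\neq0$ then $\wt(p)>H\geq sm\geq s\,\wt(x)$; if $u=0$ then $\wt(p)=s\,\wt(x)$.

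For clause~(\ref{generalized-main-clause-1}), write every element of $\tau(y)+A$ as $u+\tau(x)$ with $u\in K$ and $x\in y+C$. The terms with $u\neq0$ have weight $>H\geq r$ and drop out. The terms with $u=0$ have weight $s\,\wt(x)$, and $s\,\wt(x)\leq r$ holds iff $\wt(x)\leq\lfloor r/s\rfloor$. Setting $y=0$ and $r=H$ gives $d(A)=s\,d(C)$ when $C\neq0$, because $s\,d(C)\leq sm\leq H$. When $C=0$, the identity shows $A$ has no nonzero word of weight $\leq H$.

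\textbf{Coset isomorphism.} I would apply the ball identity as in \cref{equation-coset-metric} to get $\dist(\tau(y),A)=s\,\dist(y,C)$, again using that $s\,\dist(y,C)\leq H$. The map $\F^m/C\to(A+W)/A$ is well defined because $\tau(C)\subseteq A$, injective because $A=\rho^{-1}(C)$, and surjective by clause~(\ref{generalized-main-clause-2}). Applying the distance identity to $y+z$ shows that every pairwise coset distance is scaled by $s$.

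\textbf{Types.} For clause~(\ref{generalized-main-clause-4}), $K\leq M$ and $\tau(C)\leq W$ are both doubly even and orthogonal to each other, so by \cref{equation-weight-overlap} every sum of their words has weight divisible by $4$. Thus $A$ is doubly even and self-dual, hence Type II.

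For clause~(\ref{generalized-main-clause-5}), I expect the main subtlety. A word $v\in M\cap W^\perp$ lies in $M\cap\tau(C)^\perp=K\leq A$, so $A$ contains a word of weight $2\bmod 4$ and is therefore Type I. This step needs only the containment $W^\perp\subseteq\tau(C)^\perp$; the point to check is that it is exactly what the hypothesis is designed to supply.

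\textbf{Computability.} Clause~(\ref{generalized-main-clause-6}) follows as in \cref{structural-guarantees}. We compute $K$ by solving linear equations, and compute $\rho$ by writing $p$ in a basis of $K\oplus W$ and reading off the $W$-coordinates, then applying $\tau^{-1}$ on $W$.
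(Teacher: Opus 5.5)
Your proposal is correct and follows essentially the same route as the paper. You derive $M\cap W=0$ from protection, reuse the direct-sum and dimension count of \cref{self-dual-exchange}, and then read off the ball identity, the recovery map, the type clauses, and the coset isomorphism from the unique decomposition $p=u+\tau(x)$. The differences are cosmetic: you get $d(A)>H$ for $C=0$ directly from the ball identity, where the paper uses $A=M$, and you spell out injectivity and surjectivity of the quotient map via $\rho$ and clause~(ii).
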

%%%%%%%%%%%%%%%%%%%%%%%%%%
\begin{proof}
By \cref{equation-typed-protection}, observe that every nonzero $u\in M$ and every
$x\in\F^m$ enjoy
$\wt(u+\tau(x))>H$.
In particular, $M\cap W=0$. If not the case, then  some nonzero $u\in M\cap W$
could be written as $u=\tau(x)$, which would contradict
$\wt(u+\tau(x))=0$.
Therefore,  \cref{self-dual-exchange} argument
applies unchanged, and
\cref{equation-main-code-sum,equation-main-direct-sums} remains valid.

\emph{Clause~(\ref{generalized-main-clause-2}).}
The direct sum $A+W=K\oplus W$ and $\tau$ injectivity gives the
unique decomposition $p=u+\tau(x)$.

\emph{Clause~(\ref{generalized-main-clause-1}).}
Every word of $\tau(y)+A$ must have this expression with $x\in y+C$.
If $\wt(p)\leq H$, then \cref{equation-typed-protection} forces
$u=0$, and hence $\wt(p)=s\cdot\wt(x)$.
The converse is also immediate, therefore proving
\cref{equation-protected-ball}.
The two direct sums identify $(A+W)/A$ with $\F^m/C$.
Therefore, minimum-weight $x\in y+C$ satisfies
$s\cdot\wt(x)\leq s\cdot m\leq H$, so
\cref{equation-protected-ball} shows that the corresponding coset
has distance which is exactly
$s\cdot\min_{x\in y+C}\wt(x)$.
It also gives $d(A)=s\cdot d(C)$ when $C\neq0$.
If $C=0$, then $A=M$, and
\cref{equation-typed-protection}, (with $x=0$) yields $d(A)>H$.

\emph{Clause~(\ref{generalized-main-clause-3}).}
By clause~(\ref{generalized-main-clause-2}),
\cref{equation-recovery-map} makes $\rho$ linear.
For $p=u+\tau(x)$,
$\rho\circ\tau=\operatorname{id}$, the condition that $p\in A$ holds exactly when $x\in C$,
and $p+\tau(\rho(p))=u\in K\leq A$.
Thus, $\rho$ enjoys all the stated algebraic and quotient properties.
The weight bound holds with equality when $u=0$.
If $u\neq0$, then \cref{equation-typed-protection} yields
$s\cdot\wt(\rho(p))\leq s\cdot m\leq H<\wt(p)$.

\emph{Clause~(\ref{generalized-main-clause-4}).}
Both $K\leq M$ and $\tau(C)\leq W$ are doubly even, and they are
orthogonal.
Hence, \cref{equation-weight-overlap} shows that their sum $A$ is
doubly even.

\emph{Clause~(\ref{generalized-main-clause-5}).}
A word in $M\cap W^\perp$ lies in $K\leq A$.
If its weight is $2\bmod4$, the self-dual code $A$ is therefore
Type I.

\emph{Clause~(\ref{generalized-main-clause-6}).}
Observe that Gaussian elimination computes $K$, $A$, and $\rho$ in poly($N$) time.
\end{proof}

%%%%%%%%%%%%%%%%%%%%%%%%%%%%%%%%%%%%%%%%%%
\subsection{Type I completion}

\label{appendix-type-I}

Recall that puncturing a code in a coordinate means deleting that
coordinate from every codeword. On the other hand, shortening in that
coordinate means restricting only to codewords that are zero in
that coordinate and then deleting this coordinate.

\begin{theorem}[Type I metric self-dual completion]
\label{thm-type-I-completion}
$\forall$  $m,H\in\mathbb{Z}_{>0} $,  $H\geq 2m$, $\exists$ deterministic and efficient way to
construct Type I code $M\leq\F^N$,  $N=\Theta(H)$,
protected through $H$, with one-to-one embedding
$\tau:\F^m\to\F^N$: $\tau(x)=(x,x,0^{N-2m})$.
Our construction depends only on $m$ and $H$ and guarantees that
for every code $C\leq\F^m$, the completion $A=K+\tau(C)$ is Type I
and has all the properties of \cref{metric-exchange} with $s=2$.
\end{theorem}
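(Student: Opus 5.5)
Take the host $P$ of \cref{explicit-self-dual-code} with $\Delta=2H+1$; it is self-dual with $d(P)>2H$, has length $N_0=\Theta(H)$ with $4\mid N_0$, and depends only on $H$. It is then protected through $H$ for the embedding $\tau$, by the triangle-inequality argument of \cref{equation-main-separation}. We cannot simply take $M=P$, because $P$ might be Type II. By \cref{metric-exchange}, clause~(\ref{generalized-main-clause-5}), it suffices to build a self-dual host $M$, still protected through $H$, such that $M\cap W^\perp$ contains a word of weight $2\bmod 4$. Here $W=\tau(\F^m)$, and $W^\perp$ is the set of vectors whose first two $m$-blocks agree. Once $M$ is Type I, every completion $A=K+\tau(C)$ is Type I, and all other properties come from \cref{metric-exchange} with $s=2$. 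Note that $\wt(\tau(x))=2\wt(x)$ and that $W$ is self-orthogonal.

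To get such an $M$, I will take a direct sum with a short Type I piece placed on coordinates outside the support of $W$. Concretely, set $M=P\oplus R$ on $N=N_0+2L$ coordinates, and let $\tau(x)=(x,x,0^{N-2m})$ live inside the $P$-block. This uses $N_0>2m$. Here $R\leq\F^{2L}$ is a self-dual Type I code with large distance.

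Since $\tau$ is supported in the $P$-block, any word of $R$ lies in $W^\perp$. Protection also survives, because weights add across blocks. Write $u=(u_P,u_R)$ with $u\neq 0$. If $u_P\neq0$, the $P$-block alone already gives weight $>H$. If $u_P=0$, we need $\wt(u_R)+2\wt(x)>H$, so $d(R)>H$ is required.

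So the real task is an efficient self-dual Type I code $R$ of length $\Theta(H)$ with $d(R)>H$. I would obtain it by modifying $P$ itself, since a simple repetition like $\{(a,a)\}$ only gives a small distance. Take $P'$ from \cref{explicit-self-dual-code} with $\Delta$ a bit larger, about $2H+3$. Then apply the standard Type II$\to$Type I trick: choose a vector $v$ of weight $2\bmod 4$ and form $R=\langle P'\cap v^\perp, v'\rangle$.

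The concrete choice is to extend length by $2$. Pick a coordinate pair and use the subcode-plus-glue construction $R=\{(p,0,0):p\in P'\cap e^\perp\}\cup\bigl((e',1,1)+\cdots\bigr)$, adjusted so that the new glue word has odd inner structure giving weight $\equiv 2\pmod 4$. Keep dimension bookkeeping exactly as in \cref{self-dual-exchange}: remove one dimension by the orthogonality constraint, then add one glue word. The distance then drops by at most a constant, and the weight-$2\bmod4$ word has weight $\Theta(H)$ but still $>H$.

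I expect the main obstacle to be exactly this step: producing a Type I self-dual code whose weight-$2\bmod 4$ witness and every other nonzero word both keep weight $>H$, while preserving self-duality by the one-dimension exchange. What I would try first is to check whether $P$ from the prefix-sum construction is already Type I. Its words $\Phi(c,0)$ have weight $2\wt(c)$, so any $c\in E$ of odd weight gives weight $2\bmod4$. With $\ones_{2n}\in E$ that is not automatic, but if $E$ has an odd-weight word, $P$ itself is already Type I and we can take $M=P$ directly.

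Failing that, I would apply the same exchange lemma with $C=\langle z\rangle$ for a suitable odd-weight short source, but rescaled by repetition to weight $>H$. That is, I would use \cref{metric-exchange} on a separate $P'$ block of length $\Theta(H)$ to insert a word of weight $2\bmod4$ and weight in $(H,2H]$. Finally I would verify that the resulting code is self-dual and that its distance exceeds $H$ by \cref{equation-protected-ball}.
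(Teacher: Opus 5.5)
Your proof goes through, but only once you commit to your last construction and drop the length-$2$ extension paragraph; the route differs from the paper's. Concretely, choose $m'$ odd with $m'>H/2$ and put $H'=2m'$. Take a second host $P'\leq\F^{N'}$ from \cref{explicit-self-dual-code} with $\Delta=2H'+1$, set $\tau'(x)=(x,x,0^{N'-2m'})$, and apply \cref{metric-exchange} with $s=2$ to $C'=\langle\ones_{m'}\rangle$. Clause~(\ref{generalized-main-clause-1}) gives a self-dual $R$ of length $\Theta(H)$ with $d(R)=2m'>H$. Since $R$ contains $\tau'(\ones_{m'})$ of weight $2m'\equiv2\pmod 4$, $R$ is Type I. This is your neighbor construction $(P'\cap v^\perp)+\langle v\rangle$ with $v=\tau'(\ones_{m'})$; what makes it work is pinning $H<\wt(v)<d(P')-H$, which your first sketch left unspecified. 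Your block-support argument then shows that $M=P\oplus R$ is protected through $H$, and $(0,\tau'(\ones_{m'}))\in M\cap W^\perp$ triggers clause~(\ref{generalized-main-clause-5}).

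The length-$2$ paragraph is wrong as written. On length $n'+2$, ``remove one dimension, add one glue word'' leaves dimension $n'/2$, one short of self-dual, so you need two glue generators. Also, a glue word of weight $>H$ does not cost ``at most a constant'' of distance. With $d(P')\approx2H$ it roughly halves the distance, which is harmless only because $d(R)>H$ suffices.

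The paper instead modifies the host itself. It punctures a self-dual $P$ with $4\mid n$ at one coordinate, so $E$ has odd length $n-1$ and contains $\ones_{n-1}$, and $E^\perp$ is the shortened code. It then takes $M=\Phi(E,E^\perp)$ from \cref{prefix-sum-construction}. Now $N=2(n-1)\equiv2\pmod4$, so $\ones_N\in M\cap W^\perp$ is a weight-$2\bmod4$ witness for free, and $d(M)>2H$ gives protection by the plain triangle inequality. This is exactly your ``first try'' (an odd-weight word in $E$), realized by puncturing. That check would fail for the host of \cref{explicit-self-dual-code} as built: there $E$ comes from \cref{all-ones-symmetrization}, where $\ones_{2n}\in E\cap E^\perp$ forces every word of $E$ to be even.

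Your version buys modularity: it uses \cref{explicit-self-dual-code} and \cref{metric-exchange} purely as black boxes, bootstrapping the completion lemma to build its own Type I host. The costs are two host instances and a host distance $d(M)=2m'$ only just above $H$. That is why you need the block-support argument rather than $d(M)>2H$, and why the witness must sit off the support of $\tau$.
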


\begin{proof}
We apply \cref{explicit-self-dual-code} with $\Delta=2H+2$ to obtain
a self-dual code $P\leq\F^n$ with $4\mid n$, $n=\Theta(H)$, and
$d(P)\geq2H+2$. Puncture its last coordinate to form
$E\leq\F^{n-1}$. Since every binary self-dual code contains the
all-ones word, $\ones_{n-1}\in E$ and
$d(E)\geq d(P)-1>2H$. Recall that puncturing and shortening are dual
operations. Since $P=P^\perp$, we obtain
$$
E^\perp=\{w\in\F^{n-1}:(w,0)\in P\}
$$
and $d(E^\perp)\geq d(P)>2H$.

By the prefix-sum construction in \cref{prefix-sum-construction},
$M=\Phi(E,E^\perp)$ is self-dual, has length
$N=2(n-1)=\Theta(H)$, and has distance greater than $2H$.
Observe that since  $n\geq d(P)\geq2H+2$, we have
$N=2(n-1)\geq4H+2\geq2m$.
Thus, for every nonzero $u\in M$ and every $x\in\F^m$,

$$
\wt(u+\tau(x))
\geq d(M)-\wt(\tau(x))
>2H-H=H,
$$

with $\wt(\tau(x))=2\wt(x)\leq2m\leq H$.
Therefore, \cref{equation-typed-protection} holds.
Since $4\mid n$, we have $N=2(n-1)\equiv2\pmod4$.
Every binary self-dual code of length $N$ contains $\ones_N$,
whose weight is $2\bmod4$, and hence is Type I.
Thus, $M$ is Type I.
The embedding doubles weights, and its image $W$ is
self-orthogonal. Every word of $W$ has even weight, so
$\ones_N\in W^\perp$.
Since $\ones_N\in M$ and has weight $2\bmod4$,
clause~(\ref{generalized-main-clause-5}) of
\cref{metric-exchange} applies and therefore makes every completion $A$ Type I.
We now apply clauses~(\ref{generalized-main-clause-1})-
(\ref{generalized-main-clause-3}) of \cref{metric-exchange}.
These give all the recovery and metric properties, while
clause~(\ref{generalized-main-clause-6}) gives efficient maps.
The construction depends only on $m$ and $H$.
\end{proof}

%%%%%%%%%%%%%%%%%%%%%%%%%%%%%%%%%%%%%%%%%%%%%%%
\subsection{Type II completion}
\label{appendix-type-II}

\begin{theorem}[Type II metric self-dual completion]
\label{thm-type-II-completion}
$\forall \; m,H\in\mathbb{Z}_{>0}$ s.t. $4m\leq H$, there is a
deterministic poly-time construction of  Type II self-dual
 $M\leq\F^N$, \; $N=\Theta(H)$, with 1-to-1
embedding
$\tau:\F^m\to\F^N$ defined as
$\tau(x)=(x,x,x,x,0^{N-4m})$, so that $M$ is protected through $H$.
 $M$ and the embedding $\tau$ depend only on $m$ and $H$.
That is,
for every code $C\leq\F^m$, define (as in our main construction)
$
K\defeq M\cap\tau(C)^\perp
$
and
$
A\defeq K+\tau(C)
$.
Then $A$ is a Type II self-dual code and has all the metric and
recovery properties of \cref{metric-exchange} with $s=4$.
\end{theorem}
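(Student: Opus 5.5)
The plan mirrors the Type I proof: build a doubly even self-dual host $M$ of length $N=\Theta(H)$ with $d(M)>2H$, check that $W=\tau(\F^m)$ is doubly even and self-orthogonal, and then invoke \cref{metric-exchange} with $s=4$. Clause~(\ref{generalized-main-clause-4}) of that lemma then makes $A$ Type II.

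\textbf{The embedding.} For $\tau(x)=(x,x,x,x,0^{N-4m})$ we have $\wt(\tau(x))=4\wt(x)$, which is divisible by four. We also have $\tau(x)^\top\tau(y)=4\,x^\top y=0$ over $\F$. So $W$ is self-orthogonal and doubly even. For this we need $N\geq 4m$, which will follow from $N\geq d(M)>2H\geq 8m$. Every word of $W$ has weight at most $4m\leq H$. Hence $d(M)>2H$ gives protection through $H$ by the triangle inequality, exactly as in \cref{equation-main-separation}.

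\textbf{The host.} I would reuse the prefix-sum map of \cref{prefix-sum-construction}, but tensor it with a small Type II code so that weights become divisible by four. Take $P$ from \cref{explicit-self-dual-code} with $\Delta=2H+1$; it is self-dual of length $n$ with $4\mid n$. Let $h\leq\F^8$ be the extended Hamming $[8,4,4]$ code, which is Type II. Form the concatenation-like code
\[
M=\Bigl\{\bigl(\textstyle\sum_i p_i\otimes g_i\bigr) : p_i\in P\Bigr\}.
\]
Equivalently, $M=P\otimes h$ in the sense of the span of $p\otimes g$ with $p\in P$ and $g\in h$.

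The tensor product of two self-dual codes is self-dual: the inner products factor, so $M$ is self-orthogonal, and the dimension is $(n/2)\cdot4=8n/2$. Its distance is $d(P)\cdot d(h)\geq 4(2H+1)>2H$, and its length is $8n=\Theta(H)$.

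The weights need care. The generators $p\otimes g$ have weight $\wt(p)\wt(g)$, which is divisible by four because $4\mid\wt(g)$. A self-orthogonal code spanned by doubly even vectors is doubly even by \cref{equation-weight-overlap}. So $M$ is Type II.

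\textbf{Conclusion.} With these facts, \cref{metric-exchange} applies with $s=4$ and yields the self-dual completion and its metric and recovery properties. $K\leq M$ and $\tau(C)\leq W$ are orthogonal and doubly even, so $A$ is doubly even, which is clause~(\ref{generalized-main-clause-4}). Efficiency follows from clause~(\ref{generalized-main-clause-6}) together with the explicit Kronecker-product generator matrix.

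\textbf{Main obstacle.} The main obstacle is producing the doubly even host. I expect the tensor-with-Hamming trick to work. If it fails, I would fall back on a modified prefix-sum map with a fourfold repetition. Everything else is a direct invocation of \cref{metric-exchange}.
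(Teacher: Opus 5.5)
\textbf{There is a genuine gap: your host $M=P\otimes h$ is not self-dual.} Its dimension is $\dim P\cdot\dim h=(n/2)\cdot 4=2n$. Self-duality in length $8n$ requires dimension $4n$, so the equality $(n/2)\cdot4=8n/2$ is an arithmetic slip.

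In general, the tensor product of self-dual codes of lengths $n_1,n_2$ has dimension $n_1n_2/4$. The factorization of inner products only shows it is self-orthogonal. Its dual is $P\otimes\F^8+\F^n\otimes h$, which has dimension $6n$ and is strictly larger.

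Everything downstream needs $M=M^\perp$. The proof of \cref{self-dual-exchange}, reused in \cref{metric-exchange}, writes $K=M^\perp\cap\tau(C)^\perp=(M+\tau(C))^\perp$, and that step uses $M^\perp=M$. With your $M$ you only get $\dim A\leq\dim K+\dim C\leq N/4+m<N/2$, since $N=8n\geq 64m+8$. So $A$ is self-orthogonal and doubly even, but not self-dual, and the theorem is not established.

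The rest of your proposal is fine and matches the paper:
\begin{itemize}
\item $W$ is self-orthogonal and doubly even.
\item $\wt(\tau(x))\leq 4m\leq H$, so protection follows from $d(M)>2H$ by the triangle inequality.
\item The final appeal to clause~(iv) of \cref{metric-exchange} is correct.
\end{itemize}
The missing piece is exactly a self-dual, doubly even host with $d(M)>2H$ and $N=\Theta(H)$. Your fallback (``a modified prefix-sum map with a fourfold repetition'') is too vague to supply it.

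\textbf{How the paper builds the host.} It uses the prefix-sum map directly.
\begin{itemize}
\item Interleave two copies of $P$ to form $E=\{(a_1,b_1,\ldots,a_n,b_n):a,b\in P\}$. This code is self-dual, contains $\ones_{2n}$, and has $d(E)=d(P)$.
\item Set $M=\Phi(E,E)$. By \cref{prefix-sum-construction}, $M$ is self-dual of length $4n$ with $d(M)\geq d(P)>2H$.
\item Check double evenness on the split $\Phi(c,w)=\Phi(c,0)+\Phi(0,w)$. First, $\wt(\Phi(c,0))=2\wt(c)\equiv0\pmod4$, because $\wt(c)$ is even.
\item Next, $\wt(\Phi(0,w))$ is twice the number of $j\in\{1,\ldots,2n-1\}$ with $s_j=1$. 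That number is even, because $s_{2n}=0$ gives $\sum_{j=1}^{2n-1}s_j=\sum_{i=1}^n(s_{2i-1}+s_{2i})=\sum_i b_i=0$. This is exactly where the interleaving is used.
\item Both summands lie in the self-dual code $M$, so they are orthogonal, and \cref{equation-weight-overlap} shows their sum has weight divisible by four.
\end{itemize}
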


\begin{proof}
We first apply \cref{explicit-self-dual-code}
with $\Delta=2H+1$ to obtain a self-dual code $P\leq\F^n$
with $n=\Theta(H)$ and $d(P)>2H$.
We then interleave two copies of $P$:
$$
E\defeq\{(a_1,b_1,\ldots,a_n,b_n):a,b\in P\}
$$
Interleaving only permutes the coordinates of $P\oplus P$.
Thus, $E=E^\perp$, $\ones_{2n}\in E$, and $d(E)=d(P)$.
By \cref{prefix-sum-construction}, the code $M=\Phi(E,E)$
is self-dual, has length $N=4n=\Theta(H)$, and has distance
greater than $2H$.

To show that $M$ is Type II, we still need to check that it is
doubly even. Take any $\Phi(c,w)\in M$ and re-write
$\Phi(c,w)=\Phi(c,0)+\Phi(0,w)$.
Observe that $\Phi(c,0)$ has weight $2\wt(c)$.
As $E$ is self-dual, all its codewords have even weight,
so $2\wt(c)$ is divisible by $4$.

Now consider the second summand.
For $w=(a_1,b_1,\ldots,a_n,b_n)$, define the prefix sums
$s_0\defeq0$ and $s_j\defeq\sum_{\ell=1}^j w_\ell$
for $1\leq j\leq2n$.
Since $w\in E$ has even weight, $s_{2n}=0$.
Each prefix sum $s_j$ with $1\leq j<2n$ appears twice in
$\Phi(0,w)$, and the first and last coordinates are zero.
It follows that
$$
\wt(\Phi(0,w))
=2\bigl|\{j:1\leq j<2n,\ s_j=1\}\bigr|
$$
We claim that the number of nonzero prefix sums on the right
is even. Working over $\F$, we have
$$
\sum_{j=1}^{2n-1}s_j
=\sum_{i=1}^n(s_{2i-1}+s_{2i})
=\sum_{i=1}^n b_i=0
$$
The first (equality) step in the equation above uses the fact that $s_{2n}=0$.
For the second step, observe that consecutive prefix sums satisfy
$s_{2i-1}+s_{2i}=b_i$ (this is exactly the same trick as in our \cref{prefix-sum-construction} proof.)
The last step holds because $b\in P$ has even weight.
Thus, the number of nonzero prefix sums is even, and hence
$\wt(\Phi(0,w))\equiv0\pmod4$.

Both summands belong to the self-dual code $M$, so they are
orthogonal. By \cref{equation-weight-overlap}, their sum
has weight divisible by four as well.
This proves that $M$ is Type II.

We next check the embedding.
Since $n\geq d(P)>2H\geq m$, we have $N=4n\geq4m$,
so $\tau$ is well-defined.
Its image is doubly even and hence self-orthogonal.
Also, $\wt(\tau(x))=4\wt(x)\leq4m\leq H$.
Together with $d(M)>2H$, this bound gives
\cref{equation-typed-protection} by the triangle inequality.

All the hypotheses of \cref{metric-exchange} now hold.
Clauses~(\ref{generalized-main-clause-1}) -
(\ref{generalized-main-clause-3}) give the metric and recovery
properties, and Clause~(\ref{generalized-main-clause-4})
shows that $A$ is Type II.
Efficient completion and recovery follow from
Clause~(\ref{generalized-main-clause-6}).
The construction of $M$ and $\tau$ depends only on $m,H$ 
and is deterministic poly($N$)-time.
\end{proof}

%%%%%%%%%%%%%%%%%%%%%%%%%%%%%%%%%%%%%%%%%%%%%%%%%%%%%%%%%%%%%%%%%%%%

\end{document}